\newtheorem{theorem}{Theorem} 
\newtheorem{lemma}[theorem]{Lemma}
\newtheorem{algorithm}[theorem]{Algorithm} 
\theoremstyle{definition}
\newtheorem*{definition}{Definition}
\newcommand{\Eq}[1]{(\ref{eq:#1})}
\newcommand{\Thm}[1]{Thm.~\ref{thm:#1}}
\newcommand{\Lem}[1]{Lem.~\ref{lem:#1}}
\newcommand{\Sec}[1]{\S \ref{sec:#1}}
\newcommand{\Fig}[1]{Fig.~\ref{fig:#1}}
\newcommand{\App}[1]{App.~\ref{app:#1}}
\newcommand{\Alg}[1]{Alg.~\ref{alg:#1}}
\newcommand{\eps}{\varepsilon}
\newcommand{\bN}{{\mathbb{ N}}}
\newcommand{\bR}{{\mathbb{ R}}}
\newcommand{\bZ}{{\mathbb{ Z}}}
\newcommand{\cA}{\ensuremath{\mathcal{A}}\xspace}
\newcommand{\cC}{\ensuremath{\mathcal{C}}\xspace}
\newcommand{\cD}{\ensuremath{\mathcal{D}}\xspace}
\newcommand{\cF}{\ensuremath{\mathcal{F}}\xspace}
\newcommand{\cK}{\ensuremath{\mathcal{K}}\xspace}
\newcommand{\cL}{\ensuremath{\mathcal{L}}\xspace}
\newcommand{\cR}{\ensuremath{\mathcal{R}}\xspace}
\newcommand{\cV}{\ensuremath{\mathcal{V}}\xspace}
\newcommand{\cW}{\ensuremath{\mathcal{W}}\xspace}
\newcommand{\inv}{\text{inv}}
\newcommand{\cl}{\text{cl}}
\newcommand{\Int}{\text{int}}
\newcommand{\simplex}[1]{ \langle {#1} \rangle}
\newcommand{\FA}{\ensuremath{F_\cA}\xspace}
\newcommand{\FK}{\ensuremath{F_{\cK}}\xspace}
\newcommand{\FW}{\ensuremath{F_{\cW}}\xspace}
\newcommand{\cFK}{\ensuremath{\cF_\cK}\xspace}
\newcommand{\cFW}{\ensuremath{\mathcal{F_W}}\xspace}
\newcommand{\Conley}{\operatorname{Con}}
\newcommand{\beq}[1]{\begin{equation}\label{eq:#1}}
\newcommand{\eeq}{\end{equation}}
\newenvironment{se}[1]{\equation\label{eq:#1}\aligned}{\endaligned\endequation}
\newcommand{\bsplit}[1]{\begin{se}{#1}}
\newcommand{\esplit}{\end{se}}
\newcommand{\InsertFig}[4]
{\begin{figure}[h!t]
       \centerline{
         \includegraphics[width=#4]{./figures/#1}
       }
       \caption{{\footnotesize  #2}
       \label{fig:#3}}
\end{figure}}
\title{Simplicial Multivalued Maps and the Witness Complex for Dynamical Analysis of Time Series}
 \author{
   Zachary Alexander, Elizabeth Bradley, James D. Meiss, and Nicole Sanderson    
\\\\
\begin{tabular}{cc}
	Departments of Computer Science, Applied Mathematics \& Mathematics\\
    University of Colorado \\
	Boulder, CO 80309-0526 \\
\end{tabular}
}
\begin{document}

\maketitle

\begin{abstract}

Topology based analysis of time-series data from dynamical systems is
powerful: it potentially allows for computer-based proofs of the
existence of various classes of regular and chaotic invariant sets for
high-dimensional dynamics.  Standard methods are based on a cubical
discretization of the dynamics and use the time series to construct
an outer approximation of the underlying dynamical system. The
resulting multivalued map can be used to compute the Conley index of
isolated invariant sets of cubes. In this paper we introduce a
discretization that uses instead a simplicial complex
constructed from a witness-landmark relationship. The goal is to
obtain a natural discretization that is more tightly connected with
the invariant density of the time series itself. The time-ordering of
the data also directly leads to a map on this simplicial complex that
we call the witness map. We obtain conditions under which this witness
map gives an outer approximation of the dynamics, and thus can be used
to compute the Conley index of isolated invariant sets. The method is
illustrated by a simple example using data from the classical H\'enon
map.

\end{abstract}

\bigskip

\section{Introduction}\label{sec:intro}

Our goal in this paper is to develop some new computational topology techniques to characterize
some aspects of discrete or continuous dynamical systems.  We assume that the only
knowledge we have of the dynamics is a finite time series $\Gamma =
\{x_0,x_1,\ldots, x_{T-1}\}$ taken from a state-space trajectory of
the system.  If the system is a map, $f: X \to X$, then
$\Gamma$ is simply the iterates of the map: $x_{t+1}=f(x_t)$.  If the
system is a flow, then $\Gamma$ is a sequence of samples of the
continuous trajectory $x(t)$ and we are effectively studying the
evolution operator that maps the system forward in time.  In either
case, given this information, we cannot hope to approximate the
dynamics on all of $X$; instead, we assume that $\Gamma$ lies close to $\Lambda$,
a bounded invariant set of $f$.
For example, any orbit in the basin of an attractor will eventually approach it,
so in this case, $\Gamma$ can be taken to be the trajectory after a transient is removed.
Thus our goal is to develop tools that will allow us to characterize
properties of  $f|_\Lambda$, such as the 
number and types of periodic orbits, compute topological entropy, etc. 

The tool that we use is the discrete Conley index (we recall the definition of this index and related concepts in \App{Conley})  \cite{Conley78,Easton98,KMM04}.  The Conley index characterizes some dynamical properties of isolated invariant sets of $f$---for example it may establish the existence of periodic orbits or give a lower bound on the topological entropy. 

Since we only have access to the trajectory $\Gamma$, we must use it to obtain an approximation of the underlying dynamics before we can compute the Conley index. There are two categories of maps that can serve this purpose. The first, as we recall in \Sec{multivalued-maps}, are multivalued maps. These are set-valued and typically defined on a finite covering of a neighborhood of the invariant set $\Lambda$; 
they capture how the images of the cover map across other elements of the cover. Multivalued maps have most commonly been defined on cubical grids \cite{Mischaikow97, Mischaikow99}, but as we discuss in \Sec{Grids}, more-general grids can also be used. A multivalued map on a grid, which we call a \emph{cellular multivalued map} (CMM) in \Sec{cellular:maps}, is defined to be constant on the interior of each cell as well as on subsets of the boundary where groups of cells intersect. Note that while connectivity is obvious in uniform grids of cubical cells, this may not be the case in other grid geometries. The second category of map addresses this issue. Dual to any grid, cubical or otherwise, is a simplicial complex---the \emph{nerve} of the grid (see \App{Complexes}). We call the map induced on this complex a \emph{simplicial multivalued map} (SMM) in \Sec{cellular:maps}. 
When the number of grid cells is finite, such maps are finitely representable: they can be stored precisely in a computer and used to perform exact computations. 

The Conley index for an isolated invariant set (defined in \App{Conley}) can be computed using a corresponding CMM or SMM (techniques are recalled in \App{ComputingConley}). 
Moreover, as described in \Sec{cellular:maps},
if the cellular map is semicontinuous and \emph{acyclic}, then the map that it
induces on homology coincides with that induced by $f$, and thus it can be used to
compute the Conley index of isolated invariant sets of $f$.

The associated computational cost of these computations depends on the geometry of the cells. To minimize this complexity while still preserving the essential features, our approach uses two constructions that play major roles in computational topology: the \emph{$\alpha$-diagram} \cite{Edelsbrunner92,Edelsbrunner95} and the \emph{witness complex} \cite{deSilva04}. In \Sec{AlphaComplex} we recall that the $\alpha$-diagram of a data set is the intersection of its Voronoi-diagram  with the union of balls of radius $\alpha$ centered on the data points.  The nerve of the $\alpha$-diagram is the $\alpha$-complex: it is generically a simplicial complex and is a subset of the Delaunay triangulation (see \App{Complexes}),  limiting to the latter as $\alpha \to \infty$ \cite{Edelsbrunner95}. Since the geometry of cells in the $\alpha$-diagram is dictated by the data, rather than by rectilinear grid lines, the shape of the $\alpha$-complex naturally follows that of the invariant set $\Lambda$.

While this flexible, data-driven representation has some appealing
advantages, an $\alpha$-complex constructed from a long time series $\Gamma =
\{x_0,x_1,\ldots, x_{T-1}\}$  would have at least one simplex for each point, and 
the complexity of algorithms that construct and manipulate these
objects scales poorly with the number of simplices.
It is useful, then, to represent these data using a global topological
object that contains fewer simplices while preserving the Conley
index.  We use the \emph{witness complex} \cite{deSilva04} for this
purpose, see \Sec{witness:map}.  Instead of assigning a vertex to each point
in $\Gamma$, we represent the data by a smaller set of vertices, a set
of \emph{landmarks}, $L \subset X$, and build a simplicial complex
from those points.  As described in \Sec{AlphaComplex}, there are a
number of ways to choose landmarks. The computational complexity of this approach and
its comparison to that of a cubical-grid are discussed in
\App{Complexity}. 

The witness complex is constructed from a relation on $\Gamma \times L$: 
each point in $\Gamma$ may be a witness to one or
more landmarks and each landmark may have one or more witnesses.
Of the many possible definitions of witness relation, we choose one in which a point
$x \in \Gamma$ witnesses a set $\sigma \subset L$ if the distance between
$x$ and any landmark in $\sigma$ is no more than $\eps$ greater
than the minimum distance between $x$ and the full set $L$ of landmarks,
see \Sec{WitnessComplex}.
We use this witness relation to construct an abstract
witness complex. The simplest
implementation gives a clique or flag complex: it consists of
simplices whose pairs of vertices have a common witness.  
As we show in \Sec{equivalence}, if the landmarks are selected to
be sufficiently uniform and the trajectory is sufficiently dense,
then there are conditions under which the witness complex and the 
$\alpha$-complex are the same.

The dynamics on the time series induces a simplicial multivalued map on the witness complex. 
This SMM also induces a corresponding cellular multivalued map on a grid of $\alpha$-cells based on the landmarks. These \emph{witness maps}, which are a primary contribution of
this paper, are described in \Sec{WitnessMap}.  The construction of
the witness map is developed in several steps in \Sec{witness:map} in order to bring the
well-developed theory of \cite{KMM04} to bear upon this new
formulation and thereby establish the correspondence between the
homology of the witness map and that of the true dynamics of the
underlying system. 

Finally, in \Sec{example} we use data generated from the classic H\'enon quadratic map to give a simple illustration of the ideas in this paper. We show that, under a verifiable set of assumptions, our techniques could be used to obtain rigorous results about the underlying dynamical system.


\section{Multivalued Maps} \label{sec:multivalued-maps}

In this section we describe the concept of \emph{multivalued maps} and
obtain criteria that imply for such maps to be an \emph{enclosure} of a
map $f$. A \emph{cellular multivalued map} is defined to be constant
on each cell of a \emph{grid}, generalizing the cubical case of
\cite{KMM04}. A cellular map gives rise to a \emph{simplicial
  multivalued map} on the nerve of the grid. In the final part of this
section we show that one way to construct a grid is through an
\emph{$\alpha$-diagram}. In this case, the nerve is a geometrical
simplicial complex that is a deformation retract of the grid, and we
will show that the cellular map and the simplicial map induce the same
maps on homology.

Since we are interested in applications to data sets that
correspond to real-valued measurements of continuous dynamical
systems, we will assume that our time series $\Gamma$ is obtained from
a map $f$ on a submanifold $X \subset \bR^n$.  We will use the
Euclidean metric, $d(\cdot,\cdot)$, on $\bR^n$.  In the future, it might be 
useful to consider more-general metrics on the submanifold $X$ itself.

We begin by recalling some standard definitions for
multivalued maps that approximate a dynamical system.\footnote
{Following \cite{Day04, Day08}, we denote single-valued maps with lower-case letters (e.g., $f$), sets and set-valued maps with capital letters, e.g., $F$, and combinatorial objects and maps with calligraphic letters, e.g. \cF.}

\begin{definition}[Multivalued Map]
	A \emph{multivalued map}, $F:X \rightrightarrows X$, is a map from
    $X$ to its power set. That is, for each $x \in X$, $F(x)$ is a subset of $X$.\end{definition}

We use multivalued maps to approximate continuous maps $f:X \to X$,
and the approximation is taken to be ``good" if the action on homology 
induced by $F$ is
equivalent to that induced by $f$. In order for this to be the case, the
action of $F$ must
\emph{enclose} the action of $f$ and not introduce any extra
homological structure. These requirements are spelled out in the
following definitions.  

\begin{definition}[Outer Approximation]\label{def:outer:approximation}
	A multivalued map $F:X \rightrightarrows X$ is an \emph{outer approximation} of a continuous map $f:X \to X$ if $f(x) \in F(x)$ for each $x \in X$. In this case $f$ is said to be a \emph{continuous selector} for $F$.
\end{definition}

\noindent
The (weak) preimage of a multivalued map is itself a multivalued map defined as
\[
	F^{-1}(y) = \{ x \in X :  y \in F(x) \}.
\]
\begin{definition}[Semicontinuous]\label{def:semi:continuous}
	A multivalued map is (lower) \emph{semicontinuous} if the preimage of each open set is open. 
\end{definition}

\noindent
As usual, the $n$-dimensional homology group of a set $X$ is denoted $H_{n}(X)$. We use, for simplicity, the homology over $\bZ_2$ so that the torsion subgroups are ignored. Given this, a multivalued map preserves homology if the image of each point is homologous to a point. This is captured in the following definition.
\begin{definition}[Acyclic] \label{def:acyclic}
	A multivalued map $F$ is \emph{acyclic} if for each $x \in X$,
\[
		H_{n}(F(x)) = \left\{\begin{array}{cc} 
		                     \bZ_2\ ,\ &n = 0 \\ 
						    \mathbf{0}\ ,\ &n > 0 
					   \end{array} \right. .
\]
\end{definition}

\noindent
The key point is that if $F$ is semicontinuous and acyclic, then every continuous selector for $F$ induces the same homomorphism in homology---a consequence of the acyclic carrier
theorem \cite[Thm.~13.3]{Munkres84}. 

\begin{definition}[Enclosure] A semicontinuous, acyclic, multivalued map $F$ is an \emph{enclosure} of any continuous selector $f$.
\end{definition}

\noindent
Therefore, one can define the homology induced by an enclosure to be that of any of its continuous selectors.

In order that this homology be computable, however, it is necessary to obtain a finitely representable approximation of the map $f$, and it is this to which we turn next.

\subsection{Grids} \label{sec:Grids}

A grid allows one to construct finitely representable maps that can be
outer approximations of a map $f$ \cite{Kalies05, Mrozek99}. We will
consider generalizations of the cubical cells of \cite{KMM04} to a
grid constructed from a collection of cells $\cA = \{A_1,A_2, \dots\}$
in $X$. 
Associated with any such collection is its geometrical realization---the union of these cells as subsets of $X$---denoted by 
\beq{GeometricalRealization} 
	|\cA|:= \bigcup_{A \in \cA} A.  
\eeq 
Since the shape and number of neighbors of each cell can
vary, such a grid may permit more-efficient computational algorithms than those 
for cubes of fixed size and shape.\footnote
{See \App{Complexity} for more discussion of algorithms and associated computational complexity issues.}
Nevertheless, many of the results that appear here are easily adapted from the cubical case.

There are four basic properties for the cells of a grid.

\begin{definition}[Grid \cite{Arai09}] \label{def:abstract:grid}
	A family of nonempty compact sets \cA is a \emph{grid} on $X$ if
	\begin{enumerate}
		\item[a)] $X = |\cA|$;
		\item[b)] for all $A \in \cA $, $ A = \text{cl}(\Int(A))$;
		\item[c)] for all $A,B \in \cA$ if $A \neq B $ then $ A \cap \Int(B) = \emptyset$; and
		\item[d)] a finite subset of $\cA$ covers each compact $S \subset X$.
	\end{enumerate}
\end{definition}
\noindent
A prototypical grid is a lattice of closed cubes; indeed, this is the example studied extensively in \cite{KMM04}. An example of a more-general grid is shown in \Fig{CellularMap}. The cells in this grid are $\alpha$-cells, see \Sec{AlphaComplex}.

\InsertFig{CellularMapII}{Sketch of a a grid, $\cA_\alpha(L)$, of $\alpha$-cells (see \Sec{AlphaComplex}) based on a set of nine landmarks $\{l_i\}$ (red points) and the action of a cellular map $\FA$ \Eq{CellularMap} on two of these cells. Here the image of any $x \in \Int(A_1)$ is $A_5 \cup A_6 \cup A_8 \cup A_9$ and of any 
  $x \in \Int(A_3)$ is $A_2 \cup A_3 \cup A_5 \cup A_6$. 
  If $x \in A_1 \cap A_3$ then $\FA(x) = A_5 \cup
  A_6$.}{CellularMap}{6in}

\subsection{Cellular and Simplicial Maps}\label{sec:cellular:maps}
A cellular multivalued map is a map on the geometrical realization,
\Eq{GeometricalRealization}, of a grid $\cA$ that is constant on the interior of 
each cell of $\cA$. It generalizes the cubical multivalued map of \cite{KMM04} to a 
situation in which the cell boundaries need not be rectilinear.

\begin{definition}[Cellular Multivalued Map (CMM)]
A multivalued map $\FA: |\cA| \rightrightarrows |\cA|$ on the geometrical realization  of a grid \cA is a cellular multivalued map if it is the outer approximation of $f$ defined by
\beq{CellularMap}
	\FA(x) := 
	            \bigcap_{B\in \cA\,:\,x \in B }\{ A \in \cA: A \cap f(B) \neq \emptyset\} . 
\eeq
\end{definition}

\noindent
The map \FA takes the interior of each cell to the
union of the cells that intersect its image and
the boundary shared by multiple cells to the cells that contain the intersection of their
images.  An illustration is shown in \Fig{CellularMap}.
An implication is that CMMs are
semicontinuous because they map the boundary of each cell to a subset
of the image of the cell itself---this is a straightforward generalization of
\cite[Prop. 6.17]{KMM04} for the related cubical case.

This construction is not easy to implement on a computer for two reasons:
it is a map on a continuum $|\cA|$, and to construct it we must
know $f(x)$ for each point $x \in |\cA|$.  The second problem is
addressed by the \emph{witness map} introduced in \Sec{WitnessMap}.
The first problem can be mitigated by defining a finite map, $\cF: \cK
\rightrightarrows \cK$, on a complex $\cK$ related to the grid
$\cA$. One such complex is the CW-complex formed from $\cA$: that is,
the collection of cells $\{A_i\}$, together with their faces $\{A_i \cap A_j\}$, their edges, and so forth.  An example is
the cubical CW-complex used in the approach of \cite{KMM04}. Since the
cellular multivalued map \FA is constant on each cell of the CW-complex, it naturally gives rise to an associated combinatorial map. For example
in \Fig{CellularMap} the two-cell $A_1$ would be mapped to
$\{A_5,A_6,A_8, A_9\}$ and the one-cell represented by $A_1\cap A_3$
would be mapped to $\{A_5,A_6\}$.

Our goal is to use a more easily described complex that is also
naturally suited to homology calculations: in particular, a simplicial
complex.
A natural simplicial complex to use is the \emph{nerve} $N(\cA)$, recall \App{Complexes}. Let $L$ denote a set of labels for the elements of $\cA$, and
$\sigma = \simplex{l_0,l_2,\ldots, l_k}$ be any finite subset of $L$. The intersection of the cells labeled by the elements of $\sigma$ is denoted
\beq{Atau} 
	A_\sigma := \bigcap_{l \in \sigma} A_{l} .
\eeq
A set $\sigma$ is a simplex in the nerve when $A_\sigma \neq \emptyset$, thus 
\beq{NerveComplex}
	\cK := N(\cA) = \{ \sigma = \simplex{l_0,l_1,\ldots, l_k}: A_\sigma \neq \emptyset\} .
\eeq
For example, the nerve of the grid in \Fig{CellularMap} is shown in \Fig{SimplicialMap}.

\InsertFig{SimplicialMap}{Simplicial complex $\cK_\alpha(L)$ that is the nerve of the $\alpha$-cells of \Fig{CellularMap}, and the action of the simplicial map $\cFK$ \Eq{FKDefinition} on the simplices $\simplex{l_1}$ (green), $\simplex{l_3}$ (blue) and $\simplex{l_1,l_3}$ (red).}{SimplicialMap}{6in}

A cellular multivalued map $\FA$ induces a combinatorial map $\cFK$ on the nerve that is defined to commute with the correspondence between the grid and its nerve.

\begin{definition}[Simplicial Multivalued Map (SMM)] If $\FA$ is a CMM \Eq{CellularMap} on a grid $\cA$, the map 
$\cFK: \cK \rightrightarrows \cK$, defined by
\beq{FKDefinition} 
	\cFK(\sigma) := \left\{\tau: 
		A_\tau \subset \FA(A_\sigma) \right\} ,
\eeq
is a simplicial multivalued map.
\end{definition}

\noindent
Note that \cFK is a combinatorial multivalued map. Its domain consists
of simplices and its range of \emph{sets} of simplices; moreover, 
$\cFK(\sigma)$ is a sub-complex of $\cK$ (recall \App{Complexes}). As
an example, the simplicial map induced by the cellular map of
\Fig{CellularMap} is shown in \Fig{SimplicialMap}. In this case, since
$A_{\simplex{l_1}} = A_1$, then $
\{\simplex{l_5,l_6,l_8},\simplex{l_6,l_8,l_9}\} \subset
\cFK(\simplex{l_1})$ as are the nine faces of these two
$2$-simplices. Similarly, since $A_{\simplex{l_1,l_3}} = A_1 \cap
A_3$, then $\cFK(\simplex{l_1,l_3}) = \{\simplex{l_5,l_6},
\simplex{l_5},\simplex{l_6}\}$.

The definition \Eq{FKDefinition} satisfies the closed graph condition:
\begin{lemma}[Closed Graph Condition] \label{lem:closed:graph}
	If  $\cFK$ is an SMM and $\tau \le\sigma \in \cK$ (i.e., $\tau$ is a face of $\sigma$), then $\cFK(\tau) \supseteq \cFK(\sigma)$.
\end{lemma}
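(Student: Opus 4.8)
The plan is to reduce the statement to two order relations---one combinatorial and one set-theoretic---and then chain them through the definition \Eq{FKDefinition}. The essential observation is that passing to a face \emph{enlarges} the associated intersection cell, so the face relation on $\cK$ is order-reversing with respect to the containment of the sets $A_\sigma$ defined in \Eq{Atau}.

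First I would translate the hypothesis $\tau \le \sigma$ into a containment of cells. Since $\tau$ is a face of $\sigma$, its vertex set is a subset of that of $\sigma$, so the intersection defining $A_\tau$ ranges over fewer labels than the one defining $A_\sigma$; hence $A_\sigma \subseteq A_\tau$. (Both sets are nonempty because $\tau,\sigma \in \cK$, but only the inclusion is needed.) Next I would invoke the monotonicity of the forward set-image of the cellular map $\FA$ from \Eq{CellularMap}. Interpreting $\FA(A_\sigma)=\bigcup_{x\in A_\sigma}\FA(x)$ as usual for a multivalued map acting on a set, the inclusion $A_\sigma \subseteq A_\tau$ immediately gives $\FA(A_\sigma)\subseteq \FA(A_\tau)$, since a union over a smaller index set is contained in the union over the larger one.

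Finally I would unwind the definition \Eq{FKDefinition}. Let $\rho \in \cFK(\sigma)$; by definition $A_\rho \subseteq \FA(A_\sigma)$. Combining with the previous inclusion, $A_\rho \subseteq \FA(A_\sigma)\subseteq \FA(A_\tau)$, and so $\rho \in \cFK(\tau)$ by \Eq{FKDefinition} applied to $\tau$. This proves $\cFK(\sigma)\subseteq\cFK(\tau)$, i.e.\ $\cFK(\tau)\supseteq\cFK(\sigma)$.

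The argument is almost entirely formal, so I do not expect a genuine obstacle; the only point requiring care is the direction of the inclusions. One must keep straight that the face relation reverses cell containment ($\tau\le\sigma \Rightarrow A_\sigma \subseteq A_\tau$) while the forward image preserves containment, so that a larger simplex $\sigma$ yields a smaller set $A_\sigma$, a smaller image $\FA(A_\sigma)$, and hence a smaller collection $\cFK(\sigma)$. A secondary point worth stating explicitly is the convention that $\FA$ acts on a set as the union of its pointwise images, which is exactly what makes the monotonicity step valid.
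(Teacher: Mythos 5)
Your proof is correct and follows essentially the same route as the paper's: $\tau\le\sigma$ gives $A_\sigma\subseteq A_\tau$ via \Eq{Atau}, monotonicity of $\FA$ on sets gives $\FA(A_\sigma)\subseteq\FA(A_\tau)$, and \Eq{FKDefinition} then yields the claimed inclusion. Your explicit remark that $\FA$ acts on a set as the union of pointwise images is a sensible clarification of a step the paper leaves implicit, but it is not a different argument.
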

\begin{proof}
	Since $\tau \le \sigma$, then by \Eq{Atau} $A_\tau \supseteq A_\sigma$. Equation  \Eq{CellularMap} implies that $\FA(A_\tau) \supseteq \FA(A_\sigma)$, and so by \Eq{FKDefinition} $\cFK(\tau) \supseteq \cFK(\sigma)$.
\end{proof}

\noindent

When the nerve $\cK$ of a grid $\cA$ is a geometrical simplicial
complex (again, recall \App{Complexes}), it induces a natural cellular
multivalued map on the geometrical realization $|\cK|$, the union of the
convex hull $|\sigma| \in \bR^n$ of each of its geometrical simplices.  The
map $\FK: |\cK| \rightrightarrows |\cK|$ is the multivalued map
induced by \cFK; i.e., by analogy with \Eq{CellularMap}:
\beq{GeometricSimplicialMap} \FK(x) := \bigcap_{\sigma\in \cK \,:\, x \in
  |\sigma|} \{ |\cFK(\sigma)|\} \eeq

In certain cases, \FA and $\FK$ contain the same information about homology. We can show this when the geometric realization of the complex is a (strong) deformation retract of the geometric realization of the grid, i.e.,  when $|\cK| \subset |\cA|$ and there exists a continuous map $r:|\cA| \times [0,1] \to |\cA|$, such that
\bsplit{DeformationRetract}
		             r(x,0) &= x, \\
		             r(x,t) &= x \mbox{ if }  x \in |\cK|, \mbox{ and}\\
		             r(x,1) &:=\rho(x) \in |\cK| .
\esplit
As we will see in \Sec{AlphaComplex}, this assumption can be verified for an $\alpha$-grid and its nerve. 
We begin with the following ``partial commutativity" lemma.

\begin{lemma} \label{lem:fk:subset:fc}
	Suppose that $\cK = N(\cA)$ is a geometrical simplicial complex. Let \FA and \FK be cellular multivalued maps as in \Eq{CellularMap} and \Eq{GeometricSimplicialMap}. If $|\cK|$ is a deformation retract of $|\cA|$ and $\rho(A_i) = r(A_i,1) \subset A_i$, then for any $x \in |\cA|$, $(\FK \circ \rho)(x) \subseteq (\rho \circ \FA)(x)$.
\end{lemma}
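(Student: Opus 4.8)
The plan is to reduce the two-sided statement to a one-sided inclusion inside the grid, and then to control the image of a single ``carrier'' simplex. The starting observation is that the retraction fixes $|\cK|$ pointwise by \Eq{DeformationRetract}, so that $\rho(A_l)=A_l\cap|\cK|$ for every cell, and that $\FK$ takes values in $|\cK|$. Consequently, if I can show the plain inclusion $(\FK\circ\rho)(x)\subseteq\FA(x)$ as subsets of $|\cA|$, then applying $\rho$ and using $\rho|_{|\cK|}=\mathrm{id}$ on the left gives $(\FK\circ\rho)(x)=\rho\big((\FK\circ\rho)(x)\big)\subseteq(\rho\circ\FA)(x)$, which is the assertion. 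So the whole argument is devoted to $(\FK\circ\rho)(x)\subseteq\FA(x)$.

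First I would localize $\FK(\rho(x))$ to one simplex. Let $\sigma^{*}$ be the carrier of $\rho(x)$, i.e. the unique simplex of $\cK$ whose relative interior contains $\rho(x)$. Since $\rho(x)\in|\sigma^{*}|$, the term $|\cFK(\sigma^{*})|$ appears in the intersection \Eq{GeometricSimplicialMap}, whence $\FK(\rho(x))\subseteq|\cFK(\sigma^{*})|$; the closed-graph condition \Lem{closed:graph} shows this is the binding term, as every simplex $\sigma$ with $\rho(x)\in|\sigma|$ has $\sigma^{*}$ as a face and hence $\cFK(\sigma)\subseteq\cFK(\sigma^{*})$. Because $\cFK(\sigma^{*})$ is a subcomplex of $\cK$, each vertex $\langle l\rangle$ of any $\tau\in\cFK(\sigma^{*})$ also lies in $\cFK(\sigma^{*})$, and \Eq{FKDefinition} applied to $\langle l\rangle$ together with \Eq{Atau} gives $A_{l}=A_{\langle l\rangle}\subseteq\FA(A_{\sigma^{*}})$; thus every vertex appearing in $\cFK(\sigma^{*})$ labels a cell contained in $\FA(A_{\sigma^{*}})$.

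Next I would feed in two geometric facts about the nerve and the grid. The first, which holds because $\cK=N(\cA)$ is realized geometrically, is that each simplex lies in the union of the cells of its vertices, $|\tau|\subseteq\bigcup_{l\in\tau}A_{l}$. Combined with the previous step this yields, for each $\tau\in\cFK(\sigma^{*})$, the containment $|\tau|\subseteq\bigcup_{l\in\tau}A_{l}\subseteq\FA(A_{\sigma^{*}})$, and therefore $|\cFK(\sigma^{*})|\subseteq\FA(A_{\sigma^{*}})$. The second, and crucial, fact is that the carrier of $\rho(x)$ sees every cell that contains $x$: writing $\sigma_{x}=\{\,l:x\in A_{l}\,\}$, one has $\sigma_{x}\subseteq\sigma^{*}$. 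Granting this, $A_{\sigma^{*}}=\bigcap_{l\in\sigma^{*}}A_{l}\subseteq\bigcap_{l\in\sigma_{x}}A_{l}$, so every $w\in A_{\sigma^{*}}$ lies in all the cells that contain $x$; then the intersection in \Eq{CellularMap} defining $\FA(w)$ runs over at least the cells $B$ with $x\in B$, making it no larger than the one for $x$, so $\FA(w)\subseteq\FA(x)$, and taking the union over $w$ gives $\FA(A_{\sigma^{*}})\subseteq\FA(x)$. Chaining the inclusions produces $\FK(\rho(x))\subseteq|\cFK(\sigma^{*})|\subseteq\FA(A_{\sigma^{*}})\subseteq\FA(x)$, which is exactly the reduced statement.

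The step I expect to be the main obstacle is the carrier containment $\sigma_{x}\subseteq\sigma^{*}$, together with the simplex-in-cells fact $|\tau|\subseteq\bigcup_{l\in\tau}A_{l}$. Both encode the compatibility between the grid $\cA$ and the geometric realization of its nerve, and this is precisely where the hypotheses that $|\cK|$ is a deformation retract of $|\cA|$ and that $\rho(A_{i})\subseteq A_{i}$ must be used: the latter gives $\rho(x)\in\bigcap_{l\in\sigma_{x}}A_{l}$ (since $x\in A_{l}\Rightarrow\rho(x)\in A_{l}$), so the point $\rho(x)\in|\cK|$ lies in all cells indexed by $\sigma_{x}$, and one must argue that a point of $|\cK|$ lying in $A_{l}$ can only sit in simplices having $l$ as a vertex, i.e. that $A_{l}\cap|\cK|$ is contained in the closed star of the vertex $l$. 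Once this star property of the $\alpha$-realization is established, $\sigma_{x}\subseteq\sigma^{*}$ follows immediately, and the monotonicity of $\FA$ with respect to cell strata (a routine consequence of \Eq{CellularMap}) finishes the argument. I would derive the star property from the explicit $\alpha$-diagram/nerve correspondence of \Sec{AlphaComplex} rather than from the abstract retraction alone.
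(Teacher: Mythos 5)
Your skeleton is the same as the paper's: reduce to the one\mbox{-}sided inclusion $(\FK\circ\rho)(x)\subseteq\FA(x)$ using the fact that $\rho$ restricts to the identity on $|\cK|$, then chain $\FK(\rho(x))\subseteq|\cFK(\sigma)|\subseteq\FA(A_\sigma)\subseteq\FA(x)$ for a suitable simplex $\sigma$ whose realization contains $\rho(x)$. The paper takes $\sigma=\sigma_x=\{l:x\in A_l\}$ and simply asserts that $\rho(x)\in|\sigma_x|$ and that $|\cFK(\sigma)|\subseteq\FA(A_\sigma)$ ``follows directly'' from \Eq{FKDefinition}; you instead take $\sigma$ to be the carrier $\sigma^{*}$ of $\rho(x)$ and compensate with the monotonicity $\FA(A_{\sigma^{*}})\subseteq\FA(x)$, which you justify correctly. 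Surfacing the hidden geometric inputs ($|\tau|\subseteq\bigcup_{l\in\tau}A_l$, and the compatibility between the cells containing $x$ and the simplex carrying $\rho(x)$) is a genuine improvement in explicitness over the published proof, which elides both.

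However, the form in which you cast the second input---the crux of your argument---is flawed. You need ``$x\in A_l\Rightarrow l\in\sigma^{*}$,'' and you propose to obtain it from ``$A_l\cap|\cK|$ is contained in the closed star of $l$.'' Membership in the closed star of $l$ only says that \emph{some} simplex containing the point has $l$ as a vertex; it does not put $l$ into the carrier, since the point may sit on a face of that simplex opposite $l$. What your deduction actually requires is the open\mbox{-}star statement, and that is false in general for $\alpha$-complexes: if $\langle l_1,l_2\rangle$ is an edge of an obtuse Delaunay triangle $\langle l_1,l_2,l_3\rangle$, a point in the relative interior of the segment $|\langle l_1,l_2\rangle|$ can lie strictly inside the $\alpha$-cell of $l_3$ (the foot of the altitude from $l_3$ is closer to $l_3$ than to $l_1$ or $l_2$), so its carrier omits $l_3$ even though the point belongs to $A_{l_3}$; thus $\sigma_x\subseteq\sigma^{*}$ can fail while all hypotheses of the lemma hold. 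The repair is to weaken the requirement to what the argument actually needs: \emph{some} simplex $\sigma\in\cK$ with $\rho(x)\in|\sigma|$ and $\sigma\supseteq\sigma_x$ (in the configuration above, the triangle $\langle l_1,l_2,l_3\rangle$ itself does the job); your monotonicity step then gives $\FK(\rho(x))\subseteq|\cFK(\sigma)|\subseteq\FA(A_\sigma)\subseteq\FA(x)$ verbatim. That weaker containment is what the paper's unproved assertion ``for an $x\in A_\sigma$, $\rho(x)$ is a point in a geometric simplex $|\sigma|$'' is standing in for, and it---not the star property---is what should be extracted from the $\alpha$-diagram/nerve geometry of \Sec{AlphaComplex}.
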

\begin{proof}
Note that since $\rho$ is onto $|\cK|$ and is the identity on $|\cK|$, $\rho(A_i) = |\cK| \cap A_i$ and it suffices to show that for any $x \in |\cA|$, $\FK(\rho(x)) \subseteq \FA(x)$. Furthermore, we note that it follows directly from \Eq{FKDefinition} that for any simplex $\sigma \in \cK$, $|\cFK \left(\sigma \right)| \subset \FA(A_\sigma)$.
For an $x \in A_\sigma$, $\rho(x)$ is a point in a geometric simplex $|\sigma|$.
Therefore, by \Eq{GeometricSimplicialMap},
\[
	\FK(\rho(x)) \subseteq |\cFK( \sigma) |
	\subseteq \FA(A_\sigma) 
\]
as required.
\end{proof}

This result is exactly what is needed to show that
the maps on homology induced by \FK and \FA are isomorphic. In particular, we can prove the following:

\begin{theorem}\label{thm:acyclic} Under the hypotheses of \Lem{fk:subset:fc}, 
whenever $\FA$ is an acyclic multivalued map then $\FK$ induces the same map on homology as $\FA$.
\end{theorem}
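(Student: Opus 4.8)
The plan is to exploit the deformation retract $\rho:|\cA|\to|\cK|$ to transport the homology computation from the grid to its nerve, and then use \Lem{fk:subset:fc} together with acyclicity to pin down the induced map. First I would observe that since $|\cK|$ is a (strong) deformation retract of $|\cA|$, the inclusion $\iota:|\cK|\hookrightarrow|\cA|$ and the retraction $\rho$ are mutually inverse homotopy equivalences, so they induce isomorphisms $\iota_*:H_n(|\cK|)\to H_n(|\cA|)$ and $\rho_*=\iota_*^{-1}$ on homology (over $\bZ_2$). The strategy is to show that, under these identifications, the homology homomorphism induced by $\FK$ equals that induced by $\FA$; concretely, that $\iota_*\circ(\FK)_*\circ\rho_*=(\FA)_*$, or equivalently $(\FK)_*\circ\rho_*=\rho_*\circ(\FA)_*$ as maps $H_n(|\cA|)\to H_n(|\cK|)$.

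The engine for turning the \emph{set-level} containment of \Lem{fk:subset:fc} into an \emph{equality} of induced homomorphisms is the acyclic carrier theorem \cite[Thm.~13.3]{Munkres84}, already invoked earlier in the excerpt. The key step is to verify that both composite multivalued maps $\FK\circ\rho$ and $\rho\circ\FA$ are carried by a common acyclic carrier. Since $\FA$ is acyclic by hypothesis, and $\rho$ is a continuous single-valued map (hence carries no homology of its own), the composite $\rho\circ\FA$ assigns to each point an acyclic image; this is what supplies the acyclic carrier. \Lem{fk:subset:fc} then guarantees that $\FK\circ\rho$ is carried by this same carrier, i.e.\ $(\FK\circ\rho)(x)\subseteq(\rho\circ\FA)(x)$ for all $x$. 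By the acyclic carrier theorem, any two maps carried by the same acyclic carrier are chain homotopic and therefore induce the same homomorphism on homology, giving $(\FK\circ\rho)_*=(\rho\circ\FA)_*$. Composing on the right with the isomorphism $\iota_*=\rho_*^{-1}$ and using functoriality then yields $(\FK)_*=\rho_*\circ(\FA)_*\circ\iota_*$, which under the identification $H_n(|\cK|)\cong H_n(|\cA|)$ is exactly the statement that $\FK$ and $\FA$ induce the same map on homology.

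I expect the main obstacle to be the careful bookkeeping needed to make the acyclic carrier argument precise for \emph{multivalued} maps rather than ordinary simplicial maps. One must choose acyclic-valued chain-level selectors (or use the Vietoris--Begle style chain map associated to an acyclic multivalued map) and confirm that the containment $(\FK\circ\rho)(x)\subseteq(\rho\circ\FA)(x)$ is genuinely a subcarrier relation at the level of the supporting simplices, so that both chain maps have image inside the acyclic sets prescribed by the carrier. A subtlety is checking that $\rho\circ\FA$ is semicontinuous and that its pointwise images $\rho(\FA(x))=\FA(x)\cap|\cK|$ remain acyclic—here one leans on the retraction being the identity on $|\cK|$ and on $\rho(A_i)\subseteq A_i$, so that retracting an acyclic image does not destroy acyclicity. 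Once these carrier-theoretic hypotheses are verified, the conclusion follows formally; the conceptual content lives entirely in assembling \Lem{fk:subset:fc}, the homotopy equivalence from the deformation retract, and the acyclic carrier theorem into a single diagram.
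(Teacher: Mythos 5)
Your proposal is correct and follows essentially the same route as the paper's proof: both use \Lem{fk:subset:fc} to exhibit $\FK\circ\rho$ as a submap of $\rho\circ\FA$, verify that these composites are acyclic carriers, apply the acyclic carrier theorem of \cite[Thm.~13.3]{Munkres84} to conclude $(\FK\circ\rho)_*=(\rho\circ\FA)_*$, and then cancel $\rho_*$ using the deformation retract. Your explicit attention to why $\rho(\FA(x))=\FA(x)\cap|\cK|$ stays acyclic (via $\rho(A_i)\subseteq A_i$) is a point the paper passes over more quickly, but the argument is the same.
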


\begin{proof}
Whenever $\FA$ is an acyclic multivalued map, then the
definition \Eq{FKDefinition} implies that $\FK$ is as well. Moreover,
since there exists a deformation retract \Eq{DeformationRetract}, $\rho_*$ is the identity; thus, the maps $\rho \circ \FA$ and $\FK \circ \rho$ are also acyclic. Now,
by \Lem{fk:subset:fc}, $\FK \circ \rho$ is a submap of $\rho \circ
\FA$; therefore, it follows that there is a continuous map 
$u:|\cA| \to |\cK|$ that
is a continuous selector for both $\FK \circ \rho$ and $\rho \circ
\FA$. Thus, if $v$ and $w$ are continuous selectors for \FK
and \FA, respectively, then $(v \circ \rho)$ and $(\rho \circ w)$ are
continuous selectors carried by $(\FK \circ \rho)$ and $(\rho \circ \FA)$. 
The acyclic carrier theorem \cite[Thm.~13.3]{Munkres84} then implies that
\[
	(v \circ \rho)_* = u_* = (\rho \circ w)_* \quad
	\Rightarrow \quad v_* \circ \rho_* = \rho_* \circ w_* \quad 
	\Rightarrow \quad v_* = w_* .
\]
\end{proof}

By \Eq{CellularMap}, \FA is semicontinuous and an outer approximation
of $f$; therefore, if \FA is acyclic, it 
induces a well-defined map on the
homology groups such that $(\FA)_* = f_*$.  That is, in order to determine
the Conley index (recall \App{Conley}) 
for an isolated invariant set of $f$, we need only to know the map on homology induced by \FA and check acyclicity. In \App{ComputingConley},
we recall the theoretical framework and algorithms from \cite{Day08} that
can be used to compute the discrete Conley index for a multivalued map.

Note that our construction of the cellular multivalued map, \FA, is still purely theoretical:
since we only know the points in a time series, we do not know
the image of every point in the state space and thus cannot
compute the outer approximation. Without this, there is no direct
method to compute the associated simplicial map $\cFK$ on the nerve.
In \Sec{witness:map} we define a new map, the witness map, that can be
computed algorithmically from time-series data. We then provide a set
of conditions under which the witness map and the CMM \FA contain
the same information about homology. These results allow us to
calculate the Conley index using the witness map rather than \FA. 

\subsection{$\alpha$-Diagrams and Complexes}\label{sec:AlphaComplex}

The concepts in the previous sections can be specialized to the case
of a grid based on an $\alpha$-diagram\footnote
{See \App{Complexes} for a discussion of $\alpha$-diagrams and related simplicial complexes.}
for a finite set of \emph{landmarks}, $L = \{l_1,\ldots,l_{\ell}\}
\subset \bR^n$.  Recalling that $d(x,y)$ is the
Euclidean metric, we denote the closed ball of radius $\alpha$ centered at $l$ by 
\beq{Ball} 
	B_\alpha(l) = \{ x \in \bR^n: d(x,l) \le \alpha\} ,
\eeq
and the distance from $x$ to a set $S \subset \bR^n$ by 
$d(x,S) = \inf_{y\in S} d(x,y)$.  
For a given $\alpha >0$, the $\alpha$-cell
centered at a landmark point $l_i$ is the intersection of
$B_\alpha(l_i)$ with the Voronoi cell of $l_i$:
\beq{AlphaCell} 
	A_i(\alpha) := B_\alpha(l_i) \cap \{x \in \bR^n: d(x,l_i) \le d(x,L) \}.  
\eeq
We denote the collection of
$\alpha$-cells for a set of landmarks $L$ by
$\cA_\alpha(L)$. An example of a set of landmarks and their associated
$\alpha$-cells is shown in \Fig{CellularMap}.

In our application it will be important to choose the landmarks to cover the time series $\Gamma$ with some accuracy; in particular, there should be a landmark within some prescribed distance from each element of $\Gamma$.  The landmarks should also be relatively sparse: the minimal distance between any two should not be too small. These two requirements are linked, as will be discussed in \Sec{witness:map}.  One selection method, advocated by \cite{deSilva04}, is to choose the landmarks from $\Gamma$ itself by a ``max-min" algorithm: choose $L$ recursively by selecting the farthest point in $\Gamma$ from the previous selection until the operative density and sparseness requirements are satisfied, if possible. For the simple example in \Sec{example}, we do not choose the landmarks from the time series, but uniformly in $\bR^n$.  The general problem of finding the most appropriate landmarks in an efficient way is an interesting question for future investigation.

It is straightforward to show that a collection of $\alpha$-cells is a grid:

\begin{lemma} \label{lem:alpha:grid}
	When $\alpha >0$, the set of $\alpha$-cells $\cA_\alpha(L)$ is a grid on $|\cA_\alpha(L)|$.
\end{lemma}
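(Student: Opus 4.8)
The plan is to verify each of the four grid axioms from \Def{abstract:grid} in turn for the collection $\cA_\alpha(L)$, treating $X = |\cA_\alpha(L)|$ as the ambient space by definition, so that axiom (a) holds trivially. The key geometric fact underlying the whole argument is that each $\alpha$-cell \Eq{AlphaCell} is the intersection of two convex sets: the closed ball $B_\alpha(l_i)$ and the Voronoi cell of $l_i$, the latter being an intersection of closed half-spaces $\{x : d(x,l_i) \le d(x,l_j)\}$ over all $j \neq i$. Hence each $A_i(\alpha)$ is convex, closed, and bounded, so it is compact; and since $l_i$ lies in the interior of both the ball and its Voronoi cell (for $\alpha > 0$ and distinct landmarks), each cell is nonempty with nonempty interior. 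This establishes the basic regularity needed for axioms (b) and (c).

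First I would verify axiom (b), that $A_i(\alpha) = \cl(\Int(A_i(\alpha)))$. For a convex body (a compact convex set with nonempty interior) this is a standard fact: a convex set with nonempty interior equals the closure of its interior. I would cite or briefly invoke this, noting that $A_i(\alpha)$ is exactly such a body because $l_i \in \Int(A_i(\alpha))$ as just argued. Next I would verify axiom (c), that distinct cells have disjoint interiors, i.e. $A_i(\alpha) \cap \Int(A_j(\alpha)) = \emptyset$ for $i \neq j$. Here the Voronoi structure does the work: a point in $\Int(A_j(\alpha))$ satisfies the strict inequality $d(x,l_j) < d(x,l_i)$ for all $i \neq j$ (strict because it is interior to the Voronoi cell of $l_j$), whereas any point of $A_i(\alpha)$ satisfies $d(x,l_i) \le d(x,l_j)$; these are incompatible, so the intersection is empty.

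Finally, axiom (d) requires that every compact $S \subset |\cA_\alpha(L)|$ be covered by a finite subfamily of $\cA_\alpha(L)$. Since $L$ is a finite set of landmarks, $\cA_\alpha(L)$ is itself a finite collection, so the \emph{entire} family is finite and covers $S$ automatically; this axiom is immediate and requires essentially no argument.

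I do not anticipate a serious obstacle here, as the statement is largely a bookkeeping exercise once convexity is observed; the only point demanding a little care is axiom (b), where one must be sure that $\Int(A_i(\alpha))$ is genuinely nonempty so that the convex-body identity $A_i = \cl(\Int(A_i))$ applies. The subtlety is that the Voronoi cell of $l_i$ could in principle be lower-dimensional if the landmarks were degenerately placed, but the intersection with the full-dimensional ball $B_\alpha(l_i)$, together with $l_i$ being interior to its own Voronoi cell, guarantees a nonempty interior for $\alpha > 0$. I would state this explicitly as the one place where the hypothesis $\alpha > 0$ is used, and then the four axioms together complete the proof.
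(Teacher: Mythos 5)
Your proof is correct. There is nothing in the paper to compare it against: the authors introduce the lemma with ``It is straightforward to show that a collection of $\alpha$-cells is a grid'' and omit the verification entirely, so your convexity-based check of the four axioms of \Def{abstract:grid} supplies exactly the routine argument they have in mind. Each cell \Eq{AlphaCell} is compact and convex (a closed ball intersected with finitely many closed half-spaces), contains $l_i$ as an interior point once $\alpha>0$ and the landmarks are distinct, so the convex-body identity $A=\cl(\Int(A))$ gives (b), the strict bisector inequalities give (c), and finiteness of $L$ makes (d) immediate. The one refinement I would ask for: in the grid definition borrowed from \cite{Arai09}, the interior in axioms (b) and (c) is most naturally read relative to $X=|\cA_\alpha(L)|$ rather than to $\bR^n$. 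For (b) this costs nothing, since $\Int_{\bR^n}(A)\subseteq \Int_{X}(A)\subseteq A$ and your identity for the ambient interior forces the relative one. For (c) the relative-interior statement is formally stronger than the one you prove, but it follows with one extra sentence: for $x\in A_i\cap A_j$, the points of the segment from $l_i$ to $x$ (excluding $x$) lie in $A_i$ by convexity of $A_i$ and are strictly closer to $l_i$ than to $l_j$, hence lie in $X\setminus A_j$ arbitrarily close to $x$, so $x\notin\Int_X(A_j)$. Adding that remark would make the argument airtight; otherwise the proposal is complete.
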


\noindent
This lemma allows us to define a cellular multivalued map \Eq{CellularMap} for any $\alpha$-grid with $\alpha >0$.

The nerve of an $\alpha$-diagram is the \emph{$\alpha$-complex} denoted
\beq{AlphaComplex}
	\cK_\alpha(L) = N(\cA_\alpha(L)).  
\eeq
Since the $\alpha$-complex is a nerve, it is always an abstract simplicial complex, and, as for the general case of \Sec{cellular:maps}, there is an associated simplicial map  $\cFK$ on  \Eq{AlphaComplex}. Moreover, whenever the vertices $l_i$ are in ``general position" (recall \App{Complexes}), $\cK_\alpha(L)$ is a geometrical complex so that its simplices have dimension at most $n$ and their intersections are faces. We will always assume that the landmarks are selected to be in general position.  

For this case, Edelsbrunner proved that the geometrical realization of the nerve $|\cK_\alpha(L)|$ is a deformation retract, \Eq{DeformationRetract}, of $|\cA_\alpha(L)|$ \cite{Edelsbrunner95}.\footnote
{See \Lem{Retract} and the discussion in \App{Complexes}.}
He also showed, since the $\alpha$-cells are convex, that $\rho(\cdot) = r(\cdot,1)$ can be chosen to preserve inclusion in each specific cell: $\rho(A_i) \subset A_i$.  
Thus for an $\alpha$-grid, the hypotheses of \Lem{fk:subset:fc} hold. Consequently, \Thm{acyclic} implies that whenever the cellular map $F_\cA$ on an $\alpha$-grid is an enclosure of a dynamical system $f$, then the Conley index of an isolated invariant set can be computed from the map induced by $F_\cK$ on homology.


\section{The Witness Complex and Map}  \label{sec:witness:map}

In this section, we define a simplicial multivalued \emph{witness map} $\cFW$ on a complex $\cW$ that is a variant of the \emph{witness complexes} introduced by \cite{deSilva04}. 
The goal is to obtain an outer approximation of a continuous map $f: X \to X$ when the only data that we have is a time series $\Gamma = \{x_0,\ldots,x_{T-1}\}$ near an invariant set $\Lambda \subset X$. 
To construct a multivalued map that approximates $f$, we view
the data as ``witnesses" to a set of $\ell$ nearby landmarks, $L = \{l_1,\ldots,l_{\ell}\}$. 
There are many possible methods to choose appropriate landmarks. One strategy, as mentioned in \Sec{AlphaComplex}, is the max-min procedure of \cite{deSilva04}; but \label{page:landmarks} there are many other possible 
methods, and indeed it is not necessary that $L$ be a subset of $\Gamma$.
The landmarks can be viewed either as the centers of the cells of an $\alpha$-grid $\cA_\alpha(L)$, or as the vertices of a witness complex $\cW(\Gamma,L)$.
We show below that if the data satisfy certain density criteria and the landmarks are (more or less) uniformly spaced, then there is an $\alpha$ for which the witness complex is identical to the  $\alpha$-complex $\cK_\alpha(L) = N(\cA_\alpha(L))$.

The temporal ordering of $\Gamma$ and the witness relation give rise to both an SMM $\cFW$ and an associated CMM, $\FW$, on the $\alpha$-grid.
We show that when the map $f$ is Lipschitz and the trajectory $\Gamma$ is dense enough, 
there is an $\alpha$ such that $\FW$ is an outer approximation of $f$.

\subsection{Witness Complex}\label{sec:WitnessComplex}

A witness complex is a simplicial complex based on a finite data set $\Gamma$ that
is intended to be more parsimonious than more traditional Rips or \v{C}ech complexes (recall \App{Complexes}) \cite{deSilva04, deSilva08}. The vertices of the complex are taken from a set of landmarks $L$ whose cardinality is much smaller than that of $\Gamma$. 
The complex $\cW(\Gamma,L)$ consists of those subsets of $L$ that have a \emph{witness} in $\Gamma$. For example, de Silva and Carlsson say a point $x \in \bR^n$ is
a \emph{weak} witness to a $k$-simplex if the $k+1$ nearest
landmarks to $x$ are the vertices of the simplex. If, in addition, $x$ is
equidistant from each of the vertices, then it is a \emph{strong}
witness to the simplex .

Another way to define witness complexes is through a general construction of Dowker \cite{Dowker52} that associates abstract simplicial complexes with a relation; that is, by the selection of a subset
\beq{WitnessRelation}
	R \subset \Gamma \times L.
\eeq
For example, the strong witness complex corresponds to the relation
$
	R_{0} = \{ (x,l) \in \Gamma \times L : d(x,l) = d(x,L)\}
$.
We say that a point $x$ is a witness to a point $l$ if $(x,l) \in R$.  Thus,
\[
	W_R(\Gamma, l) = \{x\in\Gamma: (x,l) \in R\}
\]
is the set of witnesses to the landmark $l$. 
Following Dowker, a relation gives rise to two abstract simplicial complexes, by vertical and  horizontal slices, respectively. 
In our notation, the witness complex is the former: the vertices of each simplex in the complex share a witness:
$
	\bigcap_{l \in \sigma} W_R(\Gamma,l) \neq \emptyset .
$

We will use a more easily computed version of the witness complex,  a \emph{clique} complex, which is the maximal complex with a given set of edges (recall \App{Complexes}).
The clique complex for a given relation \Eq{WitnessRelation} is
\beq{CliqueComplex}
	\cW_R(\Gamma,L) = \{ \sigma: W_R(\Gamma, l) \cap W_R(\Gamma, l') \neq \emptyset,\, \forall l,l' \in \sigma\} .
\eeq
Note that, though the vertices of each edge in $\sigma$ must share a witness, there need not be a common witness to every vertex in $\sigma$.

There are many possible choices for the witness relation $R$. 
\label{page:epsilon}
We choose to use a fuzzy version of the strong-witness relation: 
\beq{FuzzyWitness}
	R_\eps = \{ (x,l) \in \Gamma \times L: d(x,l) \le d(x,L)+ \eps \}.
\eeq
That is, $x$ witnesses all landmarks no more than $\eps$ farther from $x$ than its nearest landmark.\footnote
{In the notation of \cite{deSilva04}, this relation corresponds to the
complex $W(D,\eps,1)$, where $D$ denotes the matrix of distances
between landmarks and witnesses.}
The parameter $\eps$ represents the \emph{fuzziness} of the boundary between cells.
The definition \Eq{FuzzyWitness} becomes the strong witness relation
for $\eps = 0$.
We will denote the set of witnesses to a landmark using \Eq{FuzzyWitness} by 
$
	W_\eps(\Gamma, l), 
$
and the resulting clique complex \Eq{CliqueComplex} by $\cW_\eps(\Gamma,L)$.

An example is shown in \Fig{WitnessRelation} for an orbit of the
logistic map on $[0,1]$ for a parameter value just above the first
period-doubling accumulation point. Here, the landmarks were selected to be every
$30^{th}$ point in the sorted data, an orbit of length $T=300$. The
relation \Eq{FuzzyWitness} for $\eps = 0.01$ is the set of (blue) points near the
diagonal. For the case shown, each point in $\Gamma$ is a witness to at most two landmarks, and so the maximum dimension of a simplex in the complex \Eq{CliqueComplex} is one.

\InsertFig{WitnessRelation}{Witnesses (blue points) defined by the relation \Eq{FuzzyWitness}  for the logistic map, $f(x) = 3.56x(1-x)$ with $\eps = 0.01$. The orbit $\Gamma$, shown along the horizontal axis (black points), has $T= 300$ points, and there are $\ell = 10$ landmarks, shown along the vertical axes (red points). The witness relation defines six one-dimensional simplices (the line segments along the vertical axis), giving a complex with Betti number $\beta_0 = 4$. These correspond to the four major bands in the chaotic attractor of $f$.}{WitnessRelation}{3in}
 	
One way to compute the relation \Eq{FuzzyWitness} is to sort
the rows of the $T \times \ell$ distance matrix $D_{tj} = d(x_t,l_j)$
in order of increasing size; thus, for the sorted matrix $D^s_{t,1} =
d(x_t,L)$. Then $x_t$ is a witness to all of the landmarks in the
first few columns of the $t^{th}$ row of $D^s$, namely those for which
$D^s_{t,j} \le D^s_{t,1} +\eps$.  The main computational expense
here---the distance calculations---can be reduced using a
$kd$-tree \cite{Friedman77}. 
In addition, most implementations of efficient $k$-nearest-neighbor
algorithms return their results sorted in size order.  See
\App{Complexity} for more discussion of algorithms and complexity.

The following section describes how the complex $\cW_\eps(\Gamma,L)$ using the witness
relation \Eq{FuzzyWitness} can be related to an $\alpha$-complex using the same landmark set, under some conditions on the selection of the landmarks and $\alpha$.

\subsection{Equivalence Conditions for $\cK_\alpha(L)$ and $\cW_\eps(\Gamma,L)$}\label{sec:equivalence}

The witness complex is based on the set of landmarks $L$ in $\bR^n$ that can also be viewed as the centers of an $\alpha$-grid $\cA_\alpha(L)$. Since the $\alpha$-diagram limits to 
the Voronoi-diagram, it is clear that for large enough $\alpha$,  
$|\cW_\eps(\Gamma,L)| \subset |\cA_\alpha(L)|$.
Moreover, for large enough $\alpha$, the associated $\alpha$-complex $\cK_\alpha(L)$ is a clique complex---just as we have assumed for the fuzzy witness complex using \Eq{CliqueComplex}. 
We will show here that when this is the case---and if the landmarks are
not too closely spaced---then
$\cW_\eps(\Gamma,L) \subset \cK_\alpha(L)$. Conversely, when the data $\Gamma$ are
dense enough on $|\cA_\alpha(L)|$ we will see that
$\cK_\alpha(L) \subset \cW_\eps(\Gamma,L)$. Consequently, when both sets of conditions are
satisfied, the complexes are the same. As the hypotheses to obtain
these results are independent, we state these two results separately.

\begin{theorem} \label{thm:AlphaSubsetWitness}
For a set of landmarks $L$, a time series $\Gamma$, and $\alpha, \eps > 0$, let $\cK_\alpha(L)$ 
be the $\alpha$-complex \Eq{NerveComplex} and $\cW_\eps(\Gamma,L)$ be the fuzzy witness complex 
\Eq{CliqueComplex} using the relation \Eq{FuzzyWitness}. Suppose that there is a $\delta > 0$ 
such that $\Gamma$ is $\delta$-dense on $|\cA_\alpha(L)|$ and  $\delta \le \eps/2$.
Then $\cK_\alpha(L) \subseteq \cW_\eps(\Gamma,L)$.
\end{theorem}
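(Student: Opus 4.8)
The plan is to exploit the fact that $\cW_\eps(\Gamma,L)$ is a clique complex, so that membership of a simplex is decided edge-by-edge, while the target complex $\cK_\alpha(L)$ is a nerve and hence closed under taking faces. It therefore suffices to prove the containment on vertices and edges: if $\simplex{l}\in\cK_\alpha(L)$ then $W_\eps(\Gamma,l)\neq\emptyset$, and if $\simplex{l,l'}\in\cK_\alpha(L)$ then $W_\eps(\Gamma,l)\cap W_\eps(\Gamma,l')\neq\emptyset$. Indeed, given any $\sigma\in\cK_\alpha(L)$, every edge $\simplex{l,l'}\le\sigma$ again lies in $\cK_\alpha(L)$, so once the edge statement is established the defining condition \Eq{CliqueComplex} immediately places $\sigma$ in the witness complex. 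The vertex case is just the degenerate $l=l'$ instance of the edge case, so I would treat both simultaneously.

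The geometric heart of the argument is to locate a single witness close to both landmarks. Suppose $\simplex{l,l'}\in\cK_\alpha(L)$, i.e. by \Eq{Atau} and \Eq{NerveComplex} the intersection $A_l\cap A_{l'}$ is nonempty; fix a point $p$ in it. From the definition \Eq{AlphaCell} of the $\alpha$-cell, $p\in A_l$ forces $d(p,l)\le d(p,L)$, while $l\in L$ gives the reverse inequality automatically, so in fact $d(p,l)=d(p,L)$, and likewise $d(p,l')=d(p,L)$. Thus $l$ and $l'$ are both nearest landmarks to $p$, and $p\in A_l\subset|\cA_\alpha(L)|$. Applying the $\delta$-density hypothesis at $p$ then produces a data point $x\in\Gamma$ with $d(x,p)\le\delta$.

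It remains to check that this $x$ is an $\eps$-witness to both $l$ and $l'$ in the sense of \Eq{FuzzyWitness}. Using the triangle inequality together with the fact that $p\mapsto d(p,L)$ is $1$-Lipschitz, I would bound $d(x,l)\le d(x,p)+d(p,l)\le\delta+d(p,L)$ and $d(x,L)\ge d(p,L)-d(x,p)\ge d(p,L)-\delta$, so that $d(x,l)-d(x,L)\le 2\delta$. This is exactly where the hypothesis $\delta\le\eps/2$ enters: it upgrades the bound to $d(x,l)\le d(x,L)+\eps$, placing $(x,l)\in R_\eps$, and symmetrically $(x,l')\in R_\eps$. Hence $x\in W_\eps(\Gamma,l)\cap W_\eps(\Gamma,l')$, which completes the edge (and vertex) case and therefore the theorem.

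The computation is routine; the only real subtlety lies in the reduction itself. Because $\cW_\eps(\Gamma,L)$ is a clique complex, membership of a higher-dimensional simplex requires only that each \emph{pair} of its vertices share a witness, not that all its vertices share a common one; it is precisely this feature of \Eq{CliqueComplex} that makes the edge-by-edge argument legitimate and is where I would expect a careless proof to go wrong by demanding a single witness for $A_\sigma$. The factor $2\delta$ is the natural worst case for this estimate, since $x$ may be displaced by $\delta$ away from the landmarks while $d(\cdot,L)$ simultaneously drops by $\delta$ toward a third landmark, so $\delta\le\eps/2$ cannot be relaxed within this strategy.
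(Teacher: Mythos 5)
Your proof is correct, and the geometric heart of it---perturbing a point equidistant from the relevant landmarks by at most $\delta$ and paying $2\delta\le\eps$ in the witness inequality---is exactly the paper's computation. Where you diverge is in the reduction: you first pass to edges via the clique property of \Eq{CliqueComplex}, whereas the paper applies the estimate directly to a point $y\in A_\sigma$ (nonempty by \Eq{NerveComplex}), which is simultaneously nearest to \emph{every} vertex of $\sigma$ at distance $\Delta=d(y,L)$, and so produces a single data point witnessing all of $\sigma$ at once. Your closing remark has this backwards: demanding one common witness for $A_\sigma$ is not the careless route---it succeeds by the identical two-line estimate, and it is in fact what the paper does. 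The payoff of the paper's version is explicitly noted right after the proof: because a common witness to all vertices is produced, the inclusion $\cK_\alpha(L)\subseteq\cW_\eps(\Gamma,L)$ holds even when the witness complex is taken to be the Dowker-style complex (simplices are landmark sets with a \emph{common} witness) rather than the clique complex; your edge-by-edge argument forgoes that extra generality, though it is perfectly adequate for the theorem as stated. Everything else---the identification $d(p,l)=d(p,L)$ for $p\in A_l$ from \Eq{AlphaCell}, the $1$-Lipschitz bound on $d(\cdot,L)$, and the sharpness of $\delta\le\eps/2$---matches the paper and is sound.
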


\begin{proof}
Suppose $\sigma \in \cK_\alpha(L)$, i.e., there is a $y \in |\cA_\alpha(L)|$
such that $\Delta = d(y,L) = d(y,l_i) \le \alpha$ for all $l_i \in \sigma$. We will show that there is an $x \in \Gamma$ that witnesses all the vertices in $\sigma$, i.e., 
\[
	x \in \bigcap_{l\in\sigma} W_\eps(\Gamma,l). 
\]
Since $\Gamma$ is $\delta$-dense, for any $y \in |\cA_\alpha(L)|$, there is at least one point $x \in \Gamma \bigcap B_\delta(y)$. Since $d(y,L) = \Delta$ and $d(x, y) \le \delta$, it follows that 
\[
	d(x,L) \ge \Delta - \delta. 
\]
Since $x \in B_\delta(y)$, for any $l_i \in \sigma$,
\[
	d(x, l_i) \le \Delta + \delta 
	 \le d(x,L) + 2\delta 
	 \le d(x,L) + \eps
\]
since $\delta \leq \eps / 2$. Hence, $x \in W_\eps(\Gamma, l_i)$ for
each vertex of $\sigma$ and therefore, $\sigma \in \cW_\eps$.
\end{proof}

Note that \Thm{AlphaSubsetWitness} applies even when the witness complex is not
defined as a clique complex. However, to show the converse---as we do next---requires
the clique assumption and also relies on the use of the Euclidean metric.

\begin{theorem} \label{thm:WitnessSubsetAlpha}
Suppose $\cK_\alpha(L)$ and $\cW_\eps(\Gamma,L)$ are as in \Thm{AlphaSubsetWitness},
and define
\[
	M = \max_{x \in \Gamma} d(x,L) \quad \mbox{and}\quad 
	\beta = \min_{i\neq j} d(l_i, l_j).
\]
If $\alpha$ is chosen so that so that $\cK_\alpha(L)$ is a clique complex and
\beq{AlphaBound}
	M + \eps \le \alpha \le \tfrac{\beta}{\sqrt2},
\eeq 
then $\cW_\eps(\Gamma,L) \subseteq \cK_\alpha(L)$.
\end{theorem}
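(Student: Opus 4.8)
The plan is to exploit the fact that both complexes are clique complexes, so that the inclusion reduces to a statement about edges. Since $\cK_\alpha(L)$ is assumed to be a clique complex and $\cW_\eps(\Gamma,L)$ is one by construction \Eq{CliqueComplex}, it suffices to show that every edge $\simplex{l,l'} \in \cW_\eps(\Gamma,L)$ is also an edge of $\cK_\alpha(L)$; the vertices are common to both since each landmark lies in its own $\alpha$-cell \Eq{AlphaCell}. Once every edge of $\cW_\eps$ is shown to be an edge of $\cK_\alpha$, any clique $\sigma \in \cW_\eps$ automatically has all its pairs forming simplices of $\cK_\alpha$, and the clique property then yields $A_\sigma \neq \emptyset$, i.e. $\sigma \in \cK_\alpha(L)$.

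First I would fix an edge $\simplex{l,l'}\in\cW_\eps(\Gamma,L)$ and use \Eq{CliqueComplex} to obtain a common witness $x \in W_\eps(\Gamma,l)\cap W_\eps(\Gamma,l')$. The relation \Eq{FuzzyWitness} together with $d(x,L)\le M$ gives $d(x,l)\le M+\eps\le\alpha$ and likewise $d(x,l')\le\alpha$, the last step invoking the lower bound in \Eq{AlphaBound}. A triangle inequality through $x$ then yields $d(l,l')\le 2(M+\eps)\le 2\alpha$.

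Next I would produce an explicit point of $A_l\cap A_{l'}$, namely the midpoint $m=\tfrac12(l+l')$, and verify that it lies in both $\alpha$-cells. Writing $r=d(m,l)=d(m,l')=\tfrac12 d(l,l')\le\alpha$, the ball condition of \Eq{AlphaCell} is immediate: $m\in B_\alpha(l)\cap B_\alpha(l')$. The Voronoi condition is the crux and the step I expect to be the main obstacle: I must rule out that $m$ is strictly closer to some third landmark $l_k$ than to $l$ and $l'$. For this I would invoke the Euclidean median-length (Apollonius) identity
\[
	d(m,l_k)^2 = \tfrac12 d(l_k,l)^2 + \tfrac12 d(l_k,l')^2 - \tfrac14 d(l,l')^2,
\]
and bound the first two terms below by $\beta^2$ using the definition of $\beta$, obtaining $d(m,l_k)^2\ge \beta^2 - r^2$. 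The upper bound $\alpha\le\beta/\sqrt2$ in \Eq{AlphaBound} forces $r^2\le\beta^2/2$, whence $\beta^2-r^2\ge r^2$ and thus $d(m,l_k)\ge r=d(m,l)=d(m,l')$ for every $k$. Therefore $d(m,l)=d(m,l')=d(m,L)$, so $m$ satisfies the Voronoi conditions for both $l$ and $l'$, giving $m\in A_l\cap A_{l'}$ and hence $\simplex{l,l'}\in\cK_\alpha(L)$.

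The essential difficulty, and the reason the converse (unlike \Thm{AlphaSubsetWitness}) requires both the clique hypothesis and the Euclidean metric, lies precisely in this Voronoi step: the parallelogram-law identity above is special to the Euclidean distance, and the interplay of the two bounds in \Eq{AlphaBound}---the lower one to keep $m$ inside the balls and the upper one to keep $m$ on the Voronoi facet between $l$ and $l'$---is what makes the midpoint a legitimate certificate for $A_l\cap A_{l'}\neq\emptyset$. A point to handle carefully is that the midpoint argument only certifies \emph{pairwise} intersections, which is exactly why the clique assumption on $\cK_\alpha(L)$ is needed to promote the edge-level inclusion to simplices of arbitrary dimension.
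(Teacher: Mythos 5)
Your proposal is correct and follows essentially the same route as the paper: reduce the inclusion to edges via the clique hypothesis, exhibit the midpoint of $l$ and $l'$ as a common point of the two $\alpha$-cells, use the lower bound $M+\eps\le\alpha$ for the ball condition and the upper bound $\alpha\le\beta/\sqrt2$ for the Voronoi condition. The only difference is that you justify the Voronoi step with the Euclidean median-length identity, which makes fully rigorous the claim the paper establishes by arguing that the worst case is the extremal configuration $\beta_1=\beta_2=\beta$ (illustrated in its \Fig{ProofFig}); this is a small but genuine improvement in rigor, not a different proof.
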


\begin{proof}
Note that $\cK_\alpha(L)$ and $\cW_\eps(\Gamma,L)$ have the same vertex set, and by assumption each complex is a clique complex. This means that $\cK_\alpha$ and $\cW_\eps$ are each determined completely by their edges. It follows that we only need to verify that every edge in $\cW_\eps$ is also an edge in $\cK_\alpha$. 

Supposing that $\langle l_1,l_2 \rangle \in
\cW_\eps$, then these landmarks share a witness, i.e., there is 
an $x \in \Gamma$ such that $d(x ,l_i) \le d(x,L) +
\eps$ for $i \in \{1,2\}$. We want to show that there is a point $y \in
|\cA_\alpha(L)|$ such that $d(y,l_1) = d(y,l_2) = \Delta = d(y,L) \le \alpha$. 
In the Euclidean metric there is always a point $y$ equidistant from the two
landmarks such that $d(y,l_i) = \tfrac12 d(l_1,l_2)$. Therefore, since 
$d(l_1,l_2) \le d(x,l_1) + d(x,l_2) \le 2(d(x,L) + \eps)$, then
\[
	\Delta \le d(x,L) + \eps  \le M+\eps \le \alpha,
\]
by \Eq{AlphaBound}. Let $l_3 \in L$ be next closest landmark to $y$, besides $l_1$ and $l_2$, and define $\beta_1 = d(l_3,l_1)$, and $\beta_2 = d(l_3, l_2)$.
As illustrated in \Fig{ProofFig}, $\Delta' = d(l_3, y)$ is
minimized when $\beta_1 = \beta_2 = \beta$. In this case,
the segment from $l_3$ to $y$ is the
perpendicular bisector of the segment from $l_1$ to $l_2$ and so we
have $\Delta \le \Delta'$ only if $\Delta  \le \frac{\beta}{\sqrt{2}}$.
Since $\Delta \le \alpha$, this condition is assured by \Eq{AlphaBound}, and it follows 
that $\langle l_1,l_2\rangle \in \cK_\alpha$. Since $\cK_\alpha$ is a clique complex we have shown that $\cW_\eps \subseteq \cK_\alpha$.
\end{proof}

\InsertFig{ProofFig}{An illustration of the spacing between three landmarks $l_1,l_2,$ and $l_3$, as in the proof of \Thm{WitnessSubsetAlpha}. The point $y$ is the midpoint between the two landmarks $l_1$ and $l_2$. By assumption, the distance from $l_3$ to $l_1$ or $l_2$ is at least $\beta$, and it is thus minimized when $\beta_1 = \beta_2 = \beta$.}{ProofFig}{3in}

In order to apply \Thm{WitnessSubsetAlpha}, $\cK_\alpha(L)$ must be a
clique complex, which is not always true. However, since the Delaunay complex $\cD(L)$ (recall \App{Complexes}), 
is a clique complex (the Voronoi cells cover $\bR^d$); then $\cK_\infty(L) = \cD(L)$ 
is a clique complex as well. Indeed, whenever $\alpha$ is larger than the radius
of the biggest circumsphere that defines an $n$-dimensional simplex in
$\cD(L)$, then $\cK_\alpha(L) = \cD(L)$. For the simple case of a hexagonal
array of landmarks in $\bR^2$, these circumcircles all have
radius $\beta/\sqrt{3}$, so it is easy to determine when $\cK_\alpha(L)$
is clique. For the trivial case when $\alpha < \beta/2$, the
$\alpha$-balls about each landmark are disjoint, so the
$\alpha$-complex is trivial, and also a clique complex.

When both \Thm{AlphaSubsetWitness} and \Thm{WitnessSubsetAlpha} hold,
then $\cW_\eps(\Gamma,L) = \cK_\alpha(L)$. In this case, a map defined using the
witness relation may have the same homology as a map on $\cA_\alpha(L)$. It is
to this issue that we turn next.

\subsection{Witness Map}\label{sec:WitnessMap}
Abstractly, we can define a cellular multivalued map on a grid $\cA(L)$ that contains the orbit $\Gamma$ using any witness relation: whenever $x \in A_i$ and there is a witness $x_t \in \Gamma$ to the landmark $l_i$, then the image of $x$ should include the cells that $x_{t+1}$ witnesses. The appropriate map is defined similarly to the cellular map
$\FA$, \Eq{CellularMap}, but only using the data $\Gamma$ and the witness relation $R$.

To obtain a cellular map that induces a simplicial map on the witness complex, we assume that the hypotheses of Thms.~\ref{thm:AlphaSubsetWitness}--\ref{thm:WitnessSubsetAlpha} are satisfied. In this case, there are values of $\alpha$ and $\eps$ such that $\cK_\alpha(L) = \cW_\eps(\Gamma,L)$.

\begin{definition}[Cellular Witness Map] Suppose that $\alpha$ and $\eps$ are selected as in Thms.~\ref{thm:AlphaSubsetWitness}--\ref{thm:WitnessSubsetAlpha}. The witness map $\FW: |\cA_\alpha(L)|\rightrightarrows |\cA_\alpha(L)|$ for the fuzzy witness complex $\cW_\eps(\Gamma, L)$ is the cellular multivalued map
\beq{CellularWitnessMap}
	\FW(x) := \bigcap_{A_i \in \cA_\alpha(L) \,:\, x\in A_i} \{ A_j \in \cA_\alpha(L): 
		\, \exists\, x_t \in W_\eps(\Gamma,l_i) \text{ s.t. } x_{t+1} \in W_\eps(\Gamma,l_j) \}.
\eeq
\end{definition}
\noindent
Since by hypothesis the nerve $\cK_\alpha(L) = N(\cA_\alpha(L))$ is also the witness complex, the CMM $\FW$ induces a simplicial multivalued map
\[
	\cFW: \cW_\eps(\Gamma,L) \rightrightarrows \cW_\eps(\Gamma, L)
\]
in precisely the same way that $\cFK$ was induced by $\FA$, namely by \Eq{FKDefinition}.
In other words, a simplex $\tau \in \cFW(\sigma)$ whenever there are witnesses to $\sigma$ that have images, under the temporal ordering of $\Gamma$, that are witnesses to $\tau$.  Indeed, the hypotheses of \Thm{AlphaSubsetWitness} imply that each nonempty image simplex $\tau \in \cFW(\sigma)$, which is automatically in $\cK_\alpha(L)$ since $A_\tau \neq \emptyset$, is also in the witness complex, since $\cK_\alpha(L) \subset \cW_\eps(\Gamma,L)$. Thus to guarantee that the image is in $\cW_\eps(\Gamma,L)$, we need $\Gamma$ to be sufficiently dense on the $\alpha$-shape ($\delta \le \eps/2$). 
Under the additional conditions of \Thm{WitnessSubsetAlpha}, the complexes $\cK_\alpha(L)$ and $\cW_\eps(\Gamma,L)$ coincide and we can view the witness map as having the domain $\cW_\eps(\Gamma,L)$ as well. Thus to guarantee that the domain is well-defined we need that the landmarks are more-or-less uniformly spaced ($\beta$ is not too small), and that each point in $\Gamma$ is not too far from a landmark ($M$ is not too large).

The fact that the $\alpha$ and witness complexes coincide 
gives us hope that $\FW$ will carry the same information about homology as the outer
approximation $\FA$. The following theorem ensures that this indeed is
the case when the original map $f$ satisfies a Lipschitz condition on
the grid.

\begin{theorem} \label{thm:WitnessEnclosure}
Suppose that $Y = |\cA_\alpha(L)|$ is compact and $f$ is Lipschitz on $Y$ with constant $c$. Then if $\Gamma$ is $\delta$-dense on $Y$ and $\delta \le \frac12 \eps \min\{1, \frac1c\}$, $\FW$ is an outer approximation of $f$.
\end{theorem}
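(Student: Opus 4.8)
The plan is to unwind \Def{outer:approximation}: I must show that $f(x)$ lies in the union of the cells comprising $\FW(x)$ for every $x \in Y$. I would fix such an $x$ and let $A_j \in \cA_\alpha(L)$ be any cell containing the image $f(x)$ (one exists since $f(x) \in Y = |\cA_\alpha(L)|$ and, by grid property (a), the cells cover $Y$). Because $\FW(x)$ is defined in \Eq{CellularWitnessMap} as an \emph{intersection} over all cells $A_i$ with $x \in A_i$, it suffices to exhibit, for each such $A_i$, a witness $x_t \in W_\eps(\Gamma, l_i)$ whose temporal successor satisfies $x_{t+1} \in W_\eps(\Gamma, l_j)$. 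The crucial economy, which I would establish first, is that a \emph{single} data point can serve all the cells $A_i$ containing $x$ at once.

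For the first half I would invoke $\delta$-density to pick $x_t \in \Gamma$ with $d(x, x_t) \le \delta$. If $x \in A_i$, then by the definition \Eq{AlphaCell} of an $\alpha$-cell the landmark $l_i$ realizes the distance from $x$ to $L$, i.e. $d(x, l_i) = d(x, L)$. The triangle inequality then gives $d(x_t, l_i) \le d(x, l_i) + \delta = d(x, L) + \delta \le d(x_t, L) + 2\delta$, where the last step uses $d(x, L) \le d(x_t, L) + \delta$. Since the hypothesis forces $2\delta \le \eps$, this yields $d(x_t, l_i) \le d(x_t, L) + \eps$, so $x_t \in W_\eps(\Gamma, l_i)$ \emph{for every} cell $A_i$ containing $x$ simultaneously.

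For the second half I would push the same $x_t$ forward. By the Lipschitz bound, $d(f(x), x_{t+1}) = d(f(x), f(x_t)) \le c\,\delta$. Exactly as above, $l_j$ realizes $d(f(x), L)$ because $f(x) \in A_j$, and the triangle inequality gives $d(x_{t+1}, l_j) \le d(f(x), L) + c\delta \le d(x_{t+1}, L) + 2c\delta$. The hypothesis also forces $2c\delta \le \eps$, so $x_{t+1} \in W_\eps(\Gamma, l_j)$. Combining the two halves, the pair $(x_t, x_{t+1})$ certifies that $A_j$ belongs to the collection attached to each $A_i \ni x$, hence to their intersection $\FW(x)$; therefore $f(x) \in A_j \subseteq |\FW(x)|$, as required.

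The only real obstacle is the bookkeeping at cell boundaries: since $\FW(x)$ is an intersection, a naive argument might produce a \emph{different} witness for each cell $A_i$ containing $x$ and so fail to place the single cell $A_j$ in every factor of the intersection. The resolution is the observation in the second paragraph that one point $x_t$ within $\delta$ of $x$ witnesses all the relevant $l_i$ at once, which is exactly what makes a single choice of $A_j$ valid in every factor. The two density requirements that emerge, $2\delta \le \eps$ (from the source estimate) and $2c\delta \le \eps$ (from the Lipschitz-expanded image estimate), are precisely what combine into the stated bound $\delta \le \tfrac12 \eps \min\{1, 1/c\}$; verifying that this single hypothesis implies both is the routine final check.
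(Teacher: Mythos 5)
Your proof is correct and follows essentially the same route as the paper's: pick a data point $x_t$ within $\delta$ of $x$, use the triangle inequality to show $x_t$ witnesses $l_i$ (needing $2\delta\le\eps$), and use the Lipschitz bound to show $x_{t+1}$ witnesses $l_j$ (needing $2c\delta\le\eps$). You are in fact slightly more explicit than the paper about why a single witness handles the intersection over all cells containing $x$ at a boundary point, which the paper dispatches with a one-sentence remark at the end.
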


\begin{proof}
We need to show that for any $y \in Y$, $f(y) \in \FW(y) $. Note that any such $y \in A_i$ for some $\alpha$-cell $A_i$ and $f(y) \in A_j$ for some other $\alpha$-cell $A_j$. We need to show that $A_j \subset \FW(A_i)$, or specifically, that there is an $x_t \in \Gamma$ such that $x_t \in W_\eps(\Gamma,l_i)$ and $x_{t+1} = f(x_t) \in W_\eps(\Gamma,l_j)$, where $l_i$ and $l_j$ are the landmarks associated with the $\alpha $-cells $A_i$ and $A_j$, respectively. 	 
Since $\Gamma$ is $\delta$-dense, it follows that there is $x \in \Gamma$ with $d(x, y) \le \delta$. Thus, $x$ is at most $\delta$ closer to any landmark than $y$ (whose closest landmark is $l_i$),
\beq{max:dist}
	 	d(x,L) \ge d(y,l_i) - \delta,
\eeq
and consequently:
\beq{is:witness}
		 d(x, l_i) \le d(x, y) + d(y, l_i) \le d(x,L) + 2\delta,
\eeq
Since $2\delta \le \eps$, it follows that $x \in W_\eps(\Gamma, l_i)$. In addition, since $d(f(x), f(y)) \le cd(x , y)$ and $2c\delta \le \eps$, the same reasoning as \Eq{is:witness} leads to $f(x) \in W_\eps(\Gamma,l_j)$.

Note that the points $y$ and $f(y)$ may be in multiple $\alpha$-cells, but the construction above applies to each cell, and so the conclusion  is unaffected.
\end{proof}

We have shown that, under the conditions of
Thms.~\ref{thm:AlphaSubsetWitness}--\ref{thm:WitnessEnclosure}:
\begin{itemize}
\item
the witness complex computed from data has the same homology as the
union of a set of $\alpha$-cells that cover the data, and
\item when
viewed as a multivalued map on $\bR^n$, $\FW$ is an outer approximation
of the dynamical system $f$. 
\end{itemize}
Since the cellular map $\FW$ is semicontinuous
(recall \Sec{cellular:maps}), we know that whenever it is acyclic,
then it is an enclosure of $f$. In this case, the acyclic carrier
theorem implies that 
the induced map on homology can be computed
from any continuous selector to the witness
map \cite[Thm.~13.3]{Munkres84}. However, note that acyclicity cannot
be guaranteed; it must be checked when the map is numerically
constructed.
\label{page:acyclicity}

\subsection{Computing the Map on Homology}\label{sec:Homology}

In our approach, all of the information about the topology of the invariant set $\Lambda \subset X$ 
is contained in the simplicial complex
$\cW = \cW_\eps(\Gamma,L)$, so our computation of the map $f_*$, the action induced by $f$ on the homology groups, relies heavily on this simplicial complex. We begin by recalling
the notion of a chain map.
A chain map from one simplicial complex to another consists of a homomorphism between the vertex sets, a homomorphism between the edge sets, etc., each of which commutes with the boundary operator. Commutation implies, for example, that the boundary of the image of a $k$-simplex is mapped to the image of the boundary of the $k$-simplex. An important feature of a chain map, $\varphi$, is that it induces a well-defined map in homology, $\varphi_*$ \cite{Munkres84}.

Our strategy in calculating $f_*$ is to pick an appropriate chain map,
$\varphi$, so that $f_*$ coincides with $\varphi_*$.  That is, we select $\varphi: \cW \to \cW$ to be a \emph{chain selector}, so that $\varphi(\sigma) \in \cF_{\cW}(\sigma)$ for each $\sigma \in \cK$.
Such a selector can easily be constructed using as a piecewise linear map between the
topological realizations of the simplicial complexes. That is, we
consider $|\varphi|: |\cW| \to |\cW|$. 
If $\varphi$ is a chain selector for the simplicial multivalued map $\cFW$, then it follows that $|\varphi|$ is a continuous selector for $\FW$ and hence, $\varphi_* = f_*$.

In summary, the strategy is as follows. To compute 
the Conley index of an isolated invariant set, 
it is sufficient to construct a cellular multivalued map
$\FA$ that encloses $f$. Yet computing the $\alpha$-grid on a given
data set and its associated cellular multivalued map $\FA$ can be
computationally expensive.  In this section, we have shown how to
construct a sparser simplicial complex---the witness complex
$\cW_\eps(\Gamma, L)$.  In addition, we define two associated
multivalued witness maps - the cellular map, $\FW$, defined implicitly
on $|\cA_\alpha(L)|$ and the combinatorial simplicial map, $\cFW$, defined on
the finitely determined complex $\cW_\eps(\Gamma,L)$. Moreover, when the
hypotheses of the theorems in this section can be verified, then
$\cW_\eps(\Gamma,L) = \cK_\alpha(L)$ and $\FW$ encloses the
dynamical system $f$.  Then any continuous selector for this map will
capture the homomorphism on homology $f_*$.  We get this continuous selector of $\FW$ by constructing a continuous selector of $\cFW$ and taking its geometric realization.


\section{An example: The H\'enon Map}  \label{sec:example}

The procedure for putting the mathematics of the previous sections
into practice on time-series data from a dynamical system is as
follows:

\begin{enumerate}
	\item Given a time series $\Gamma = \{x_0,\ldots,x_{T-1}\}
          \subset X$, which we assume lies near an
          invariant set $\Lambda\subseteq X$, select a set of
          landmarks, $L = \{l_0,\ldots,l_{\ell-1}\}$, that are evenly
          distributed across $\Lambda$ (cf.,
          page~\pageref{page:landmarks} and \App{ComputingConley}).
          These landmarks will be the vertices of a simplicial
          complex.

	\item Choose a value for the $\eps$ parameter that
          satisfies the requirement \Eq{AlphaBound} for
          \Thm{WitnessSubsetAlpha} and use witness/landmark
          relationships to simultaneously define a simplicial complex
          $\cW_\eps(\Gamma, L)$ and a simplicial multivalued
          map, \cFW. Note that, even though we never
          need to construct an $\alpha$-complex, \Eq{AlphaBound}
          implies that there is an $\alpha$ that
          has the same homology as $\cW$.

	\item Pick a subset of $L$ as a starting guess for an isolated
          invariant set and use Algs.~\ref{alg:invariant:part}, 
          \ref{alg:grow:isolating}, and \ref{alg:IndexPair} 
          of \App{ComputingConley} to find an index
          pair $(|N|,|E|)$ for $f$.  There are many different
          strategies for choosing the initial guess; if one is
          attempting to find periodic orbits, for instance, it makes
          sense to search for recurrent points in the time series and
          use the nearest landmark as the starting point for the
          algorithms.  The important property is that the guess should
          be a subset of its period-image under the simplicial multivalued
          map.

	\item Use a chain selector for \cFW to calculate
          $f_*:H_*(|N|,|E|) \to H_*(|N|,|E|)$.

\end{enumerate}

\noindent In the rest of this section, we illustrate this procedure on H\'enon's
classic map \cite{Henon76}:
\beq{HenonMap}
	 f(x, y) = (y + 1 - 1.4x^2, 0.3x).
\eeq
This map has an invariant set $\Lambda$ that is an
attractor, and we generate a trajectory $\Gamma$ of length $T= 10^5$
that starts from the initial condition $z_0 = (-0.4, 0.3)$, near
$\Lambda$.  The trajectory is shown in \Fig{HenonTraj}. As a simple
test of the witness map technique presented in the previous section,
we use this trajectory to verify the trivial fact that $f$ has a fixed
point. We will assume that $\Gamma$ is an exact
trajectory of \Eq{HenonMap}, making no claim that our computation is
rigorous. The latter could, at least in principle, be done using
interval arithmetic.  Given \Eq{HenonMap}, of course, a simple
calculation shows that this system has \emph{two} fixed points.  Our
goal is to find one of those fixed points using only the time series
$\Gamma$.  Note that this is a proof-of-concept example, not an
exhaustive exploration of the parameter space of the algorithm.
Moreover, it is simple enough that the homology calculations can be
carried out by hand.

\InsertFig{HenonTraj}{Time-series data, $\Gamma$, comprising $10^5$ iterates of the
  H\'enon map \Eq{HenonMap} (grey points) from an initial
  condition near the attractor, and the set $L$ of $216$ evenly
  spaced landmarks (blue circles) that approximate the attractor from
  which $\Gamma$ was sampled. Note that the vertical and horizontal
  scales are different.}{HenonTraj}{4in}

We begin by selecting a set of landmarks to approximate the attractor $\Lambda$.
Again, as a proof of principle, we simply space these landmarks evenly
within the bounding box of the orbit $[-1.5,1.5] \times [-0.4, 0.4]$.
With the goal of reflecting the structure of the attractor and yet
having significantly fewer landmarks than points on the orbit ($\ell
\ll T$), we use a hexagonal grid with spacing $\beta = 0.05$. Indeed,
retaining only those landmarks that are within $\beta$ of a
time-series point gives $\ell = 216$ landmarks (so $\ell \sim
\sqrt{T}$), as shown in \Fig{HenonTraj}. This has the effect of
distributing the landmarks across the attractor with enough resolution
to detect some of its fractal structure.

The next step in the process is to define witness-landmark relation
$R_\eps$ of equation \Eq{FuzzyWitness}. 
Although we do not need to build an $\alpha$-complex, it is
possible to do so when the requirement \Eq{AlphaBound} is satisfied. For example, 
given the hexagonal geometry, the
$\alpha$-complex will be a clique complex when $\alpha <
\beta/2$---i.e., when it is trivially totally disconnected---or when $\alpha
\ge \beta/\sqrt{3}$, the distance from a vertex to the center of the
equilateral triangle of side $\beta$. Similarly, by construction,
every data point is in one of the equilateral triangles formed from
the landmarks, i.e., $M \le \frac{\beta}{\sqrt{3}}$. The requirement
\Eq{AlphaBound} is then satisfied if we choose $\alpha = \beta/\sqrt{2}$, and
\[
	\eps \le \left(\tfrac{1}{\sqrt{2}} - \tfrac{1}{\sqrt{3}} \right) \beta.
\]
Recall that given an $\eps \ge 0$, a pair $(x_t,l_j) \in R_\eps \subset \Gamma \times L$ ($x_t \in W_\eps(\Gamma,l_j)$), according to \Eq{FuzzyWitness}, if and only if $d(x_t, l_j) \le d(x_t,L) + \eps$. 
This witness relationship serves to define a
clique complex $\cW_\eps(\Gamma,L)$, by \Eq{CliqueComplex}: the edge $\langle
l_i,l_j\rangle \in \cW_\eps$ if and only if $W_\eps(\Gamma, l_i) \cap W_\eps(\Gamma,l_j) \neq
\emptyset$. By varying $\eps$ up to the bound above and looking at the resulting
complexes, we finally selected $\eps = 0.005$; this is large enough so that the complex is connected, but small enough that the shape of the complex still reflects the primary fold in the attractor. The resulting complex is shown in \Fig{HenonFixedPt}.

Having built the witness complex, we then construct the cellular
multivalued map $\FW: |\cA_\alpha(L)| \rightrightarrows |\cA_\alpha(L)|$ using the
witness-landmark relationships, as described by \Eq{CellularWitnessMap}.  Next
we use the time series to search for an isolating neighborhood for
$\FW$. To apply \Alg{grow:isolating}, we need a guess for an isolating
neighborhood.  For a periodic orbit, this can be found by looking for
nearly recurrent points in the time series
\cite{Lathrop1989}. Consequently, to choose an initial guess for the
purpose of finding a fixed point, we can simply search for a time $t$
that minimizes $d(x_t, x_{t+1})$. Following this approach, we find
that $x_{39,436} = (0.6313, 0.1894)$ is a good candidate---indeed, it
is close to the analytical fixed point $(x^*,y^*) \approx
(0.6313544771, 0.1894063431)$.  
The cell of the landmark nearest to this point gives a useful initial
guess for the isolating neighborhood for \Alg{grow:isolating}.
This isolating neighborhood is then
used as the input of \Alg{IndexPair} to obtain an index pair $(N,E)$
for the fixed point of $f$. The result is shown in \Fig{HenonFixedPt}.

\InsertFig{HenonFixedPt}{Witness complex and an index pair $(N,E)$ for
  the fixed point of the H\'enon map. The points correspond to the
  landmarks, $L$, the blue squares represent the isolating
  neighborhood $N \setminus E$ and the red triangles are the
  exit set $E$.  The witness complex corresponds to the landmarks, the
  lines, and the grey triangles.}  {HenonFixedPt}{4in}

The final step is to calculate the Conley index of the isolated
invariant set that is the invariant part of $N \setminus E$.  From
this index, we can then infer the existence of a fixed point.  We
begin by taking a close look at the map \cFW restricted to the index
pair $(N,E)$. The index pair is shown in \Fig{HenonFixedPt}. Recall
that $\FW$ is a map that is constant on $\alpha$-cells. Though we do not
need to compute these $\alpha$-cells, visualizing them---as in the
sketch shown in \Fig{IndexPairGrid}---helps in understanding the
various multivalued maps involved in this process.  In
\Fig{IndexPairGrid}, $N = \{A_1,\ldots,A_9\}$ and $E =
\{A_1,A_2,A_7,A_8\}$. The blue and red landmarks are the nexuses of
the $\alpha$-cells that make up $N\setminus E$ and $E$, respectively.

In this example, the map $\FW$ restricted to the
index pair can be described by the following transition matrix:
\beq{TransitionMatrix} S = \begin{pmatrix} 0 & 0 & 0 & 0 & 0 & 0 & 0 &
  0 & 0 \\ 0 & 0 & 0 & 0 & 0 & 0 & 0 & 0 & 0 \\ 1 & 0 & 0 & 0 & 0 & 1
  & 1 & 0 & 0 \\ 0 & 0 & 0 & 0 & 1 & 1 & 1 & 0 & 0 \\ 0 & 0 & 1 & 1 &
  1 & 0 & 0 & 1 & 1 \\ 0 & 1 & 1 & 0 & 0 & 0 & 0 & 1 & 1 \\ 0 & 0 & 0
  & 0 & 0 & 0 & 0 & 0 & 0 \\ 0 & 0 & 0 & 0 & 0 & 0 & 0 & 0 & 0 \\ 0 &
  0 & 0 & 0 & 0 & 1 & 1 & 0 & 0 \\
		\end{pmatrix},			
\eeq 
where $S_{ij} = 1$ if and only if $A_j \subset
\FW(\Int(A_i))$---i.e., there is a witness of the landmark associated
with $A_i$ whose image under the shift map is a witness of the
landmark associated with $A_j$. For example $\FW(A_6) = 
A_2\cup A_3\cup A_8\cup A_9$.
Geometrically, this image is a disk, and is thus acyclic. Indeed, it
is straightforward---if tedious---to verify from \Eq{TransitionMatrix}
that each cell maps to an acyclic set under the witness map
$\FW|_N$. Moreover, by \Eq{CellularWitnessMap}, whenever $x \in A_i \cap A_j$,
it has an image that is the intersection of the images of the
individual cells. In this way, one can compute the images under $\cFW$
of the twelve one-dimensional simplices in the index pair. These
images, inferred from \Eq{TransitionMatrix}, can be verified to be
acyclic when restricted to $N$. Finally only one of the two-simplices
in $N$ remains in $N$ under the map, namely
$\cF_W(\simplex{l_3,l_4,l_9}) = \simplex{l_6,l_7}$; this image has no
homology.  Consequently, the map $\FW$ is acyclic \emph{on the index
  pair}. This condition is sufficient for our limited purpose here of
verifying that $f$ has a fixed point in $N$.  More generally, the
acyclicity of the witness maps on the entire complex could easily be
automated---as is done for cubical complexes in the software package
``CHomP" \cite{CHOMP}.

\InsertFig{IndexPairGrid}{The witness complex (grey triangles, black
  lines, and dots) for the index pair $(N,E)$ for the landmarks from
  \Fig{HenonFixedPt}.  In this case, the witness complex is also the
  nerve of the $\alpha$-cells (shown as truncated red and blue spheres
  in the figure) since \Eq{AlphaBound} is satisfied.  Dashed lines
  show the analytical image of the simplicial complex under the
  H\'enon map \Eq{HenonMap}. The black square is the fixed point.}
          {IndexPairGrid}{5.5in}

Now, we want to represent this index pair with a corresponding
simplicial complex $\cW(N,E)$---specifically, the witness complex
associated with the landmarks $\{l_1,\ldots,l_9\}$, corresponding to
$\alpha$-cells $\{A_1,\ldots,A_9\}$. The witness complex
$\cW(N,E)$, which is also the $\alpha$-complex in this case,
is pictured in \Fig{IndexPairGrid}.
In order to compute the Conley index, we need a simplicial complex
that represents the quotient space $N/E$. Since the $\alpha$-cells
$A_1, A_2, A_7$, and $A_8$ make up the exit set $E$, we make the
identification
\[
 	l_1 \cong l_7 \cong l_2 \cong l_8 := E.
\]
The resulting simplicial complex is shown in \Fig{QuotientComplex}.
The homology of the simplicial complex $\cK(N,E)$ can easily
be computed by hand in this example. In particular, the quotient space
$N/E$ consists of a single connected component so $H_{*0}(N,E) =
\bZ_2$. The quotient space has a single, nonbounding cycle:
\[
	\sigma = \simplex{E,l_3} + \simplex{l_3,l_4} + \simplex{l_4,l_5} + 
	\simplex{l_5,l_6} + \simplex{l_6,E},
\]
so $H_{*1}(N,E) = \bZ_2$. 
\InsertFig{QuotientComplex}{The quotient simplicial complex
  $N\setminus E$, where $N$ is the simplicial complex shown in
  \Fig{IndexPairGrid} and $E= \{l_1,l_2,l_7,l_8\}$.}{QuotientComplex}{2.0in}

We are now ready to compute the Conley index of the isolating
neighborhood $N \setminus E$. Specifically, we compute $f_*: H_*(N,E)
\to H_*(N,E)$. As described in the previous section, the homology of
$f$ is equivalent to the homology of $\varphi$, where $\varphi$ is a
chain selector for \cFW. Therefore, in order to compute the Conley
index, we need to find $\varphi_*([\sigma]) := [\varphi(\sigma)]$.

The chain selector $\varphi$ is defined inductively by first
determining the image of each vertex in $\cK(N,E)$ (to be
enclosed by \cFW), and then determining the image of each edge so that
$\varphi$ commutes with the boundary operator. In addition, recall
that \cFW on the quotient space must be an enclosure of the index map
$f_N$. We begin with the initial assignment of vertices:
\begin{align*}
	\begin{array}{l |cccccc}
	\simplex{i} & \simplex{E} & \simplex{l_3} & \simplex{l_4} & \simplex{l_5} 
			& \simplex{l_6} & \simplex{l_9} \\
	\hline
	\varphi_0(\simplex{l_i}) & \simplex{E} & \simplex{l_6} & \simplex{l_5} & \simplex{l_3}
			& \simplex{l_9} & \simplex{l_6}
	\end{array}
\end{align*}
To compute $\varphi(\sigma)$, we need to find the images of the edges in $\sigma$ as well.
In order for $\varphi$ to be a chain selector for \cFW, the image of each edge,
$\tau$, must be a subset of $\cFW(\tau)$. Furthermore, $\varphi$ must
commute with the boundary operator, so we need $\varphi_0 \circ
\partial_1 = \partial_1 \circ \varphi_1$. Those two conditions yield
the following edge assignments:
\begin{align*}
	\begin{array}{l|ccccc}
		\tau & \simplex{E,l_3} & \simplex{l_3,l_4} & \simplex{l_4,l_5}
			& \simplex{l_5,l_6} & \simplex{l_6,E}\\
		\hline
		\varphi_1(\tau) & \simplex{E,l_6}& \simplex{l_6,l_5}& \simplex{l_5,l_4}+\simplex{l_4,l_3} 
			&\simplex{l_3,l_9} & \simplex{l_9,E}
	\end{array}
\end{align*}
It follows that $\varphi(\sigma) = \sigma$ (in $\bZ_2$
homology). Since this map is not nilpotent, then it is not in the
shift equivalence class of $[0]$ and thus the invariant set $S =
\inv(N \setminus E,f) \neq \emptyset$ \cite[Thm 10.91]{KMM04}.

With this very simple example, we have illustrated that the witness
complex and the associated witness map can be used to compute the
Conley index for a simple of isolating neighborhood. We plan in
the future to apply this method to more-complex and higher-dimensional
dynamics.


\section{Conclusions \& Future Work} \label{sec:conc}

Computational topology is a powerful way to analyze time-series data
from dynamical systems. Existing approaches to the approximation of a
dynamical system on algebraic objects construct multivalued maps from
the time series using cubical discretizations, then use those maps to
compute the Conley indices for isolated invariant sets of cubes. The
approach described in this paper, by contrast, discretizes the
dynamics using a simplicial complex that is constructed from a
witness-landmark relationship. A natural discretization like this,
whose cell geometry is derived from the data, is more parsimonious and
thus potentially more computationally efficient than a cubical
complex. We then use the temporal ordering of the data to construct a
map on this simplicial witness complex that we call the \emph{witness
map}. Under the conditions established in \Sec{WitnessMap}, this
witness map gives an outer approximation of the dynamics, and thus can
be used to compute the Conley index of isolated invariant sets in the
data.

As a proof of concept, we applied our methods to data from the classic
H\'enon map and located an isolating neighborhood for a fixed point of
this dynamical system. There are many other potential applications in
the study of dynamical systems. Our approach could also be used to
find periodic orbits as well as connecting orbits between them---a
strategy that ultimately leads to rigorous verification of chaotic
dynamics. 

An important question that we leave open is:
can one develop rigorous computational methods for multivalued maps based on
simplicial complexes? While interval arithmetic has been done for
cubical complexes \cite{Day08}, an approach based on the selection of
an appropriate value of $\eps$ to account for finite precision
arithmetic might be appropriate. 

In the future we also hope to explore the
application of these techniques to scalar time-series data sampled
from a dynamical system. In this case, delay-coordinate
embedding \cite{Packard80,Takens81,Sauer91} can be
used to create a trajectory $\Gamma$ of the form required by our
methods. Of course, noise becomes an issue in any consideration of
experimental data. In the case of bounded noise, it may be possible
to turn our $\eps$ parameter to advantage---in the same spirit as in
\cite{Mischaikow99}, where the size of the cells in the multivalued
map is chosen to account for the experimental error.

Our techniques may also have significant impact in the numerical
simulation of differential equations. In \cite{Mischaikow95},
numerical integration---while keeping track of the magnitude of
round-off error---is used to prove that there is chaos present in the
Lorenz equations \cite{Lorenz63}. A key step in this proof is showing
that one can construct a multivalued map that is truly an outer
approximation of a given function $f$. Theorem
\ref{thm:WitnessEnclosure} indicates that our techniques are
appropriate for these types of proofs. The computational efficiency
that the data-driven discretization confers upon the witness-map
construction process should allow this approach to scale well with
dimension, so it is likely that constructions based on this map could
be used to generate computer-based proofs about high-dimensional
differential equations. This would be a significant advance in the
field.

A large body of research in the field of computational topology has
revolved around the concept of
\emph{persistence} \cite{Delfinado95,Edelsbrunner08,Ghrist08,Robins99}. The
idea behind topological persistence is that many computations in this
field depend upon simple scale parameters. For example, the
$\alpha$-complex of a point cloud depends upon the parameter $\alpha$,
the fuzzy witness complex depends upon the parameter $\eps$, etc. It
makes sense, then, to perform these calculations over a range of
parameter values and to search for intervals where the topological
properties remain constant. This was the rationale for the choice of
 $\eps$ in \Sec{example}. A major area for future
research is the development of a theory of persistence in the context
of Conley index theory. We believe that contribution described in this paper is a
significant step in this direction.

\section*{Acknowledgment}
This material is based upon work supported by the National Science
Foundation under Grants \#CMMI-1245947, \#CNS-0720692, and
\#DMS-1211350.  Any opinions, findings, and conclusions or
recommendations expressed in this material are those of the authors
and do not necessarily reflect the views of the National Science
Foundation.

The authors would like to thank the referees for extensive comments
that improved the paper. They also thank Robert Ghrist for pointing
out the Dowker reference and Konstantin Mischaikow and Robert Easton
for useful conversations.


\pagebreak
\section*{Appendices}
\appendix

\section{Simplicial Complexes for Discrete Data}\label{app:Complexes}

An \emph{abstract simplicial complex} $\cK$ is a collection of
finite, ordered subsets $\sigma =\simplex{l_{i_0},\ldots l_{i_k}}$ in
the power set, $2^L$, of a set of vertices $L$ such that if
$\tau \le \sigma$ is a ``face" of $\sigma$ (it contains only vertices also in $\sigma$), 
then it is also a simplex in $\cK$. The vertices, $\simplex{l_i}$, are zero-simplices; edges,
$\simplex{l_i, l_j}$, are one-simplices, etc. The empty set is a face
of every simplex. For points $l_i \in \bR^n$, the geometrical realization of a simplex,
$|\sigma|$, is the convex hull of its vertices. A \emph{geometrical simplicial complex} is an abstract complex $\cK$ such that each intersection $|\sigma| \cap |\tau|$ of simplices in $\cK$ is a face of both \cite{Edelsbrunner95}. 

A simplicial complex $\cL$ is a sub-complex of $\cK$ if every simplex in $\cL$ is in $\cK$. The $k$-skeleton of $\cK$ is the sub-complex containing all simplices of dimension $k$ or less. Thus a one-skeleton is the graph formed from the vertices and edges. 

A \emph{clique} (or \emph{flag}) complex is the maximal complex with a given set of edges \cite{Zomorodian10} (called a ``lazy" complex by \cite{deSilva04}). Thus a clique complex is determined by its one-skeleton.

The \emph{nerve} $N(\cA)$ of a collection of sets $\cA$ is an abstract simplicial complex constructed from finite intersections. The vertices, $l_i \in L$, are labels for the elements $A_i \in \cA$, and a simplex $\sigma =\simplex{l_{0},\ldots, l_{k}}$ is in the nerve if the $k+1$ corresponding sets have nonempty intersection:
\[
	N(\cA) = \{\sigma: \bigcap_{l_i \in \sigma} A_{i} \neq \emptyset \}.
\]
Thus each vertex in the nerve corresponds to the label $l_i$ of the cell $A_i$, and each edge $\simplex{l_i,l_j}$ to a nonempty intersection $A_{i}\cap A_{j}$, etc.
It was shown by Karol Borsuk and Andr\'e Weil that, in certain cases, the nerve has the same homology or homotopy type as the geometrical realization of the collection $\cA$:

\begin{lemma}[Nerve Lemma \cite{Borsuk48, Weil52, Bott82, Hatcher02}]\label{lem:Nerve}
Let $\cA$ be a collection of closed sets such that every finite intersection between its members is either empty or contractible. Then $N(\cA)$ has the same homotopy type as $|\cA|$.
\end{lemma}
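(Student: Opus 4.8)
The plan is to connect $|\cA|$ and the geometric realization $|N(\cA)|$ through an auxiliary \emph{blowup} space, exhibiting both as the ends of a zig-zag of homotopy equivalences; this is the homotopy-colimit form of the classical Borsuk--Weil argument. Define
\[
	B := \bigcup_{\sigma \in N(\cA)} A_\sigma \times |\sigma| \ \subset\ |\cA| \times |N(\cA)|,
\]
where $A_\sigma$ is the intersection \Eq{Atau} indexing $\sigma$ and $|\sigma|$ is the geometric simplex on its vertices. The two coordinate projections give
\[
	|\cA| \ \xleftarrow{\ p\ }\ B \ \xrightarrow{\ q\ }\ |N(\cA)|,
\]
and the whole argument reduces to proving that each of $p$ and $q$ is a homotopy equivalence; composing one map with a homotopy inverse of the other then yields $|\cA| \simeq |N(\cA)|$.

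First I would dispatch $q$. For $\xi$ in the open interior of a simplex $|\sigma|$, the point-inverse $q^{-1}(\xi)$ is exactly $A_\sigma$, which by hypothesis is nonempty and contractible. Organizing $B$ as the homotopy colimit of the diagram $\sigma \mapsto A_\sigma$ over the poset of simplices of $N(\cA)$ (with structure maps the inclusions $A_\sigma \hookrightarrow A_\tau$ for $\tau$ a face of $\sigma$), the map $q$ is the comparison to the homotopy colimit of the terminal, one-point diagram, whose realization is the barycentric subdivision of $N(\cA)$, i.e.\ $|N(\cA)|$. Since each structure map $A_\sigma \to \{\mathrm{pt}\}$ is a homotopy equivalence, the projection lemma for homotopy colimits shows that $q$ is a homotopy equivalence.

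Next I would treat $p$. Over a point $x \in |\cA|$, let $\sigma(x) = \{\, l : x \in A_l \,\}$; since $x \in A_{\sigma(x)} \neq \emptyset$ this is a genuine simplex of $N(\cA)$, and $p^{-1}(x)$ is the full geometric simplex $|\sigma(x)|$, which is convex and hence contractible. Here $p$ is the canonical map from the same homotopy colimit $B$ to the \emph{ordinary} colimit of the intersection diagram, which is the union $|\cA|$. Informally, $B$ is obtained from $|\cA|$ by replacing each point with the contractible simplex of cells meeting there, and collapsing those simplices recovers $|\cA|$.

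The real work---and the place where the \emph{closed}-cover hypothesis matters---is in upgrading this last comparison from fiberwise contractibility to an honest homotopy equivalence: the statement $\operatorname{hocolim} \simeq \operatorname{colim}$ holds only when the intersection diagram is cofibrant, i.e.\ when each inclusion $A_\sigma \hookrightarrow A_\tau$ (for $\tau$ a face of $\sigma$) is a closed cofibration and $|\cA|$ carries the colimit topology of the cover. For the collection of $\alpha$-cells of interest here these conditions are automatic: the cells are compact convex sets meeting in common faces, so $\cA$ is a finite regular cell decomposition and every intersection inclusion is a cofibration, whence no point-set pathology can intervene. In the fully general statement one simply imposes exactly these regularity hypotheses (as in the cited references), after which the two equivalences above compose to give the result. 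If only the weaker conclusion of equal \emph{homology} is required, I would instead invoke the Mayer--Vietoris spectral sequence of the cover $\cA$, whose $E_1$-page is the simplicial cochain complex of $N(\cA)$ with each intersection replaced by the homology of a contractible set; it collapses at $E_2$ to give $H_*(|\cA|) \cong H_*(|N(\cA)|)$ directly, bypassing the cofibration bookkeeping.
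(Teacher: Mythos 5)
The paper offers no proof of this lemma at all: it is stated as a classical result and attributed to the cited references (Borsuk, Weil, Bott--Tu, Hatcher), so there is no in-paper argument to compare yours against. Your sketch is a correct reconstruction of the standard modern proof: the Mayer--Vietoris blowup $B$ with its two projections, the projection lemma applied to $q$ (whose fibers over open simplices are the contractible sets $A_\sigma$), and the collapse under $p$ of the contractible fibers $|\sigma(x)|$. Three remarks. First, your identification of $B$ with the homotopy colimit over the face poset is only up to a natural homotopy equivalence (the bar-construction model is a quotient of $\coprod A_{\sigma_k}\times\Delta^k$ over chains, not literally the union inside the product), but this is a routine elision in a sketch. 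Second, you are right to flag that the lemma \emph{as printed}---an arbitrary closed cover with contractible intersections---is false without further regularity hypotheses; that is a defect of the statement, not of your proof, and your observation that the $\alpha$-cells of interest are compact convex sets (so all the inclusions $A_\sigma \hookrightarrow A_\tau$ are cofibrations between ANRs) is exactly what makes the application in the paper legitimate; this is also Edelsbrunner's setting in the retraction results quoted just below the lemma. Third, your argument is a reduction to two substantial pieces of machinery (the projection lemma and the hocolim-to-colim comparison for cofibrant diagrams) rather than a self-contained proof, which is appropriate for a cited classical result; and your Mayer--Vietoris spectral sequence fallback is the cheapest route to the homology-only conclusion, which---since the paper works exclusively with homology over $\bZ_2$---is all that is actually used downstream.
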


There are many natural ways of defining simplicial complexes for a finite point set $L = \{l_1, l_2, \ldots, l_\ell\} \subset \bR^n$. If $\cA = \{B_r(l): l \in L\}$ is the collection of closed radius-$r$ balls \Eq{Ball} around the set of landmarks, then the \emph{\v{C}ech complex}, $\cC_r(L)$, is the nerve of $\cA$. Since the balls are convex subsets of $\bR^n$, the nerve lemma implies that $\cC$ has the same homotopy type as $|\cA|$. The sequence
of \v{C}ech complexes is nested: $\cC_{r}(L) \subseteq \cC_{r'}(L)$ when $r < r'$. A similar complex, the Rips (or Vietoris-Rips) complex, $\cR_r(L)$, consists of all simplices whose vertices are pairwise within a distance $r$ of each other:
\[
 \cR_r(L) = \{\sigma: d(l,l') \le r, \, \forall\, l, l' \in \sigma\}.
\]
Since this complex is determined by its edges, it is a clique complex. Rips complexes are also nested as $r$ grows, and, moreover, they are interleaved with \v{C}ech complexes:
\[
	\cR_{r'}(L) \subset \cC_{r}(L) \subset \cR_{2r}(L)
\]
whenever $r' < r \sqrt{2(n+1)/n}$ \cite{Ghrist08, deSilva07}.
This gives a relation between the persistent homologies of the family of Rips complexes and the family of \v{C}ech complexes.

The \emph{Voronoi diagram} $\cV(L) = \{V_l: l \in L\}$ is the covering of $\bR^n$ by the cells 
\[
	V_l = \{ x \in \bR^n: d(x,l) \le d(x,l'), \, \forall l' \in L\}.
\]
Note that each Voronoi cell is convex, since it is the intersection of half-spaces, and two such cells are either disjoint or they meet on a portion of their boundaries.
The associated simplicial complex is the \emph{Delaunay complex} $\cD(L) = N(\cV(L))$, the nerve of the Voronoi diagram. When the points $L$ are in ``general position" (no more than $n+1$ points lie on any $(n-1)$-sphere) then $\cD(L)$ is a geometrical complex \cite{Edelsbrunner95}. Since this is generically true, general position can be achieved by almost any, arbitrarily small perturbation of the points in $L$. Thus it is common to assume $L$ is in general position.

The cells in the \emph{$\alpha$-diagram} are the intersection of the Voronoi cells with a closed ball of radius $\alpha$ about a vertex:
\[
	\cA_\alpha(L) = \{ V_{l} \cap B_\alpha(l) : l \in L\}
\]
The corresponding nerve is the \emph{$\alpha$-complex}, $\cK_\alpha(L) = N(\cA_\alpha(L))$.
An alternative characterization is that $\sigma \in \cK_\alpha(L)$ if there exists a ball $B_r(x_0)$ with $r \le \alpha$ that contains no vertices in its interior, $\Int(B_r(x_0)) \cap L = \emptyset$, but for which $\sigma \subset \partial B_r(x_0)$. The boundary of such a ball is called a circumsphere for $\sigma$. For the Euclidean metric, each $\alpha$-cell is convex, since it is the intersection of convex sets. Thus the nerve lemma implies that $|\cK_\alpha(L)|$ is homotopy equivalent to $|\cA_\alpha(L)|$.
Note that the $\alpha$-complex is the intersection of the \v{C}ech complex and the Delaunay complex,
\[
	\cK_\alpha(L) = C_\alpha(L) \cap \cD(L),
\]
so it is a sub-complex of each. Moreover, as $\alpha \to \infty$, $\cK_\alpha(L) \to \cD(L)$.

A subset $A \subset X$, is a \emph{deformation retract} of $X$ if there exists a continuous map $r: X \times I \to X$ satisfying \Eq{DeformationRetract}. The restriction $\rho:X \to A$ defined by $\rho(\cdot) = r(\cdot,1)$ is then a \emph{retraction} of $X$ onto $A$. 
If $A$ is a deformation retract of $X$ then it is homotopy equivalent to $X$. More generally,
two spaces $A$ and $B$ are homotopy equivalent iff there is a space $X$ and embeddings $a: A \to X$ and $b: B \to X$ such that both $a(A)$ and $b(B)$ are deformation retracts of $X$ \cite[Cor. O.21]{Hatcher02}. 

In fact the \emph{$\alpha$-shape}, $|\cK_\alpha(L)|$, is a deformation retraction of the $\alpha$-grid $|\cA_\alpha(L)|$ \cite{Edelsbrunner95}. There are two parts to Edelsbrunner's result, and we give only a brief discussion of the ideas in his paper.

\begin{lemma} For any $\alpha \ge 0$ and any finite set of landmarks $L \subset \bR^n$ in general position, $|\cK_\alpha(L)| \subset |\cA_\alpha(L)|$.
\end{lemma}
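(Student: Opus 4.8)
The plan is to prove the stronger pointwise statement that every point of $|\cK_\alpha(L)|$ lies in some single $\alpha$-cell, by reducing to one simplex and then applying a barycentric averaging identity. Since $|\cK_\alpha(L)|$ is the union of the convex hulls $|\sigma|$ over simplices $\sigma \in \cK_\alpha(L)$, it suffices to fix one such $\sigma = \simplex{l_0,\ldots,l_k}$ and show $|\sigma| \subseteq |\cA_\alpha(L)|$. The nerve condition \Eq{NerveComplex} gives $A_\sigma \neq \emptyset$, so I would first extract a witness point $y \in A_\sigma = \bigcap_i (V_{l_i}\cap B_\alpha(l_i))$. Membership of $y$ in each Voronoi cell forces all vertices of $\sigma$ to be equidistant nearest landmarks to $y$, and membership in each ball bounds that common distance, so that $d(y,l_i)=\Delta=d(y,L)\le\alpha$ for every $i$.

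The key step is then a direct computation. For an arbitrary $p \in |\sigma|$ I would write $p=\sum_i\lambda_i l_i$ in barycentric coordinates ($\lambda_i\ge0$, $\sum_i\lambda_i=1$) and compute the $\lambda$-weighted average of the squared distances to the vertices. Setting $u_i=l_i-y$ so that $|u_i|=\Delta$ and $p-y=\sum_i\lambda_i u_i$, expanding $|p-l_i|^2=|(p-y)-u_i|^2$ and summing against the $\lambda_i$ collapses the cross term and yields the identity
\[
	\sum_i \lambda_i\, d(p,l_i)^2 = \Delta^2 - |p-y|^2 \le \Delta^2 \le \alpha^2.
\]
Because a convex combination is at least its smallest entry, this forces $\min_i d(p,l_i)^2\le\alpha^2$; i.e., at least one vertex of $\sigma$ lies within distance $\alpha$ of $p$, whence $d(p,L)\le\alpha$.

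To finish, I would let $l^\ast$ be a landmark nearest to $p$, so that $p\in V_{l^\ast}$ automatically and $d(p,l^\ast)=d(p,L)\le\alpha$ places $p\in B_\alpha(l^\ast)$; therefore $p\in V_{l^\ast}\cap B_\alpha(l^\ast)=A_{l^\ast}\subseteq|\cA_\alpha(L)|$. Since $p\in|\sigma|$ was arbitrary and $\sigma$ was an arbitrary simplex of $\cK_\alpha(L)$, the inclusion $|\cK_\alpha(L)|\subseteq|\cA_\alpha(L)|$ follows.

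The part needing the most care is the averaging identity and what it actually buys. The naive triangle-inequality estimate $d(p,l_i)\le 2\Delta$ is too weak, and the nearest landmark to a point of a Delaunay simplex need not be a vertex of that simplex when $n\ge 2$, so one cannot argue cell-by-cell purely geometrically. The averaging trick sidesteps this: it does not claim every vertex is close, only that the barycentric-weighted mean squared distance is bounded by $\Delta^2$, which is exactly enough to guarantee one nearby vertex and hence a covering $\alpha$-cell. Note that the only hypothesis beyond $\alpha\ge0$ that is used is $A_\sigma\neq\emptyset$, which supplies the equidistant center $y$; general position is not needed for this inclusion, though it is what makes $\cK_\alpha(L)$ a geometrical complex in the first place.
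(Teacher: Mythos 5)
Your proof is correct, but it takes a genuinely different route from the paper's. The paper follows Edelsbrunner's argument: it locates the intersection point of the mutually intersecting $\alpha$-cells in the interior of $|\sigma|$, and then proves by induction on dimension --- using star-convexity of the union $S(\sigma)=\bigcup_{l\in\sigma}A_l$ relative to that point, together with convexity of $|\sigma|$ --- the \emph{stronger} per-simplex containment $|\sigma|\subset S(\sigma)$, i.e.\ that each point of the simplex lies in the $\alpha$-cell of one of that simplex's own vertices. Your averaging identity $\sum_i\lambda_i\,d(p,l_i)^2=\Delta^2-|p-y|^2$ is a clean and valid computation (the cross term does collapse as you claim), and since $|\cA_\alpha(L)|$ is exactly $\{x:d(x,L)\le\alpha\}$, bounding the weighted mean by $\Delta^2\le\alpha^2$ and passing to the nearest landmark's cell does prove the stated inclusion; as you note, it needs neither induction nor general position. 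What the paper's route buys in exchange for the extra work is precisely the stronger statement: the containment in the cells of $\sigma$'s own vertices is what underlies the cell-preserving retraction $r(A_i,t)\subset A_i$ of \Lem{Retract}, which is in turn a hypothesis of \Lem{fk:subset:fc}; your argument, which may land $p$ in the cell of a landmark outside $\sigma$, would not by itself support that later construction. One small quibble with your closing remark: Edelsbrunner's lemma actually shows that every point of a simplex of $\cK_\alpha(L)$ \emph{does} have a nearest landmark among that simplex's vertices, so your stated reason for avoiding the cell-by-cell argument is not right --- though this aside plays no role in your proof.
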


This follows because when a collection of $\alpha$-cells mutually intersect, i.e., when $A_\sigma \neq 0$, they must do so at a point $x$ in the interior of $|\sigma|$ (viewed
as a subset of its spanning $k$-plane).
The proof proceeds by induction (it is easy for $0$-simplices), and uses the facts that the union $S(\sigma) = \cup_{l\in \sigma} A_l$ is star-convex, relative to $x$, and that $|\sigma|$ is itself convex. The implication is that $|\sigma| \subset S(\sigma)$ for each simplex in $\cK_\alpha(L)$. 
The result is used to construct the deformation retract. 

\begin{lemma}\label{lem:Retract} For any $\alpha \ge 0$ and any finite set of landmarks $L \subset \bR^n$ in general position, $|\cK_\alpha(L)|$ is a deformation retract of $|\cA_\alpha(L)|$.
\end{lemma}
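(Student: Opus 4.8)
The plan is to construct an explicit \emph{strong} deformation retraction $r:|\cA_\alpha(L)| \times [0,1] \to |\cA_\alpha(L)|$ satisfying \Eq{DeformationRetract}, whose time-one map $\rho$ carries everything into $|\cK_\alpha(L)|$ while fixing $|\cK_\alpha(L)|$ pointwise for all $t$. The preceding lemma already supplies both the inclusion $|\cK_\alpha(L)| \subseteq |\cA_\alpha(L)|$ and, crucially, the containment $|\sigma| \subset S(\sigma) = \bigcup_{l\in\sigma} A_l$ for every simplex $\sigma \in \cK_\alpha(L)$; this is what guarantees that the target of the retraction genuinely sits inside the grid, so only the construction of the motion itself remains.

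First I would set up the dual stratification of $|\cA_\alpha(L)|$. For $x \in |\cA_\alpha(L)|$ let $\sigma(x) = \{\, l \in L : x \in A_l \,\}$ be the set of labels of the $\alpha$-cells containing $x$; by the definition \Eq{NerveComplex} of the nerve, $\sigma(x)$ is always a simplex of $\cK_\alpha(L)$. Grouping points by the value of $\sigma(x)$ decomposes the grid into relatively open ``mixed cells'' $C_\sigma$, each dual to a simplex $\sigma$. The geometry that makes these cells tractable is orthogonality: if $\sigma$ is a $k$-simplex, then its affine hull $P_\sigma = \mathrm{aff}(\{\, l_i : l_i \in \sigma \,\})$ and the Voronoi face $V_\sigma = \bigcap_{l\in\sigma} V_l$ (the locus equidistant from the vertices of $\sigma$) lie in complementary orthogonal affine subspaces that meet only at the circumcenter $z_\sigma$. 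Consequently, near $C_\sigma$ the grid has a local product structure: a tangential (Delaunay) component along $|\sigma| \subset P_\sigma$ times a normal (Voronoi) component along $V_\sigma$.

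Second, on each closed mixed cell I would define the straight-line deformation $r(x,t) = (1-t)\,x + t\,\pi_{P_\sigma}(x)$ that collapses the normal (Voronoi) component by sliding $x$ toward its orthogonal projection $\pi_{P_\sigma}(x)$ onto the simplex plane while preserving the tangential component. Convexity of the $\alpha$-cells, hence of $C_\sigma$, keeps the entire segment inside $|\cA_\alpha(L)|$, and the inclusion $|\sigma| \subset S(\sigma)$ from the preceding lemma ensures $\pi_{P_\sigma}(x)$ lands in $|\sigma| \subset |\cK_\alpha(L)|$. A point already in $|\cK_\alpha(L)|$ lies in some $P_\sigma$ and is fixed by the projection, so it is fixed for all $t$, yielding the strong retraction condition locally; as a bonus the motion never leaves a cell, which also gives the stronger property $\rho(A_i) \subset A_i$ quoted later in the text.

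Finally, I would assemble the local retractions into a single continuous map by induction on the dimension of the mixed cells, defining $r$ first on the lowest-dimensional strata and extending upward, at each stage checking that the formula on a cell restricts on its boundary to the map already built. The main obstacle is exactly this compatibility across shared faces: a naive radial or nearest-point motion need not agree on the overlaps of adjacent mixed cells, whereas the purely \emph{orthogonal} collapse chosen above restricts on each face $C_\tau \le C_\sigma$ to the collapse associated with $\tau$, making the agreements automatic. The remaining technical points—verifying that $\pi_{P_\sigma}(x)$ really lands in $|\sigma|$ and that the segments stay in $|\cA_\alpha(L)|$, including the degenerate case where $z_\sigma$ falls outside $|\sigma|$ so $C_\sigma$ is only star-shaped rather than convex—are precisely where the general-position hypothesis and the star-convexity of $S(\sigma)$ from the preceding lemma do the real work.
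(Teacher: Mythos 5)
Your overall strategy---an orthogonal Delaunay/Voronoi decomposition followed by a linear flow onto the simplices---is exactly the one Edelsbrunner uses and that the paper sketches, but the stratification you build it on is the wrong one, and this breaks the construction at its central step. The level sets $C_\sigma=\{x:\sigma(x)=\sigma\}$ of $\sigma(x)=\{l\in L: x\in A_l\}$ do \emph{not} have the product structure you invoke: for $\dim\sigma=k\ge 1$, every point of $C_\sigma$ already lies in the Voronoi face $V_\sigma$, hence in the $(n-k)$-plane through the circumcenter $z_\sigma$ orthogonal to $P_\sigma$, so $\pi_{P_\sigma}$ collapses all of $C_\sigma$ to the single point $z_\sigma$ (which, as you yourself note, need not lie in $|\sigma|$); and for a vertex $\sigma=\simplex{l}$ the cell $C_\sigma$ is the full $n$-dimensional bulk of $A_l$ and $\pi_{P_\sigma}$ sends all of it to the landmark $l$. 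Concretely, take $n=2$, $l_1=(-1,0)$, $l_2=(1,0)$, $\alpha=3/2$, so that $|\cK_\alpha(L)|$ is the segment joining the landmarks. Your map sends all of $A_1\setminus A_2$ to $l_1$, all of $A_2\setminus A_1$ to $l_2$, and the separating segment $A_1\cap A_2$ to the origin: it is discontinuous across the cell boundary, its image is three points rather than $|\cK_\alpha(L)|$, and it moves the point $(-\tfrac12,0)\in|\cK_\alpha(L)|$, since that point lies in $C_{\simplex{l_1}}$ and not in $C_{\simplex{l_1,l_2}}$---so the claim that $|\cK_\alpha(L)|$ is fixed also fails. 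The asserted face-compatibility fails for the same reason: $\pi_{P_{\simplex{l}}}$ restricted to $A_l\cap A_{l'}$ is the constant $l$, not $\pi_{P_{\simplex{l,l'}}}$.

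The decomposition that makes the argument work is indexed the other way around. To each simplex $\sigma\in\cK_\alpha(L)$ one associates the $n$-dimensional region swept out by the $(n-k)$-planes orthogonal to $P_\sigma$ through the points $y\in\Int|\sigma|$, intersected with $S(\sigma)$; a point $x$ is assigned to $\sigma$ according to where the foot of its orthogonal fiber lands in $|\cK_\alpha(L)|$, not according to which $\alpha$-cells contain it, and the flow moves $x$ linearly toward that foot. In the two-landmark example this assigns the vertical strip $-1<x_1<1$ to the edge (vertical fibers, flowing onto the segment) and the two outer caps to the vertices (radial flow toward $l_1$ or $l_2$), and the two prescriptions agree on the lines $x_1=\pm1$. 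The real content of the proof is then (i) that these fibers are pairwise disjoint and cover $|\cA_\alpha(L)|\setminus|\cK_\alpha(L)|$---this is where general position and the Delaunay/Voronoi orthogonality actually enter---and (ii) that each fiber meets $|\cA_\alpha(L)|$ in a set that is star-convex about its foot, so the linear flow stays inside the grid and inside each $A_i$; this is where the star-convexity of $S(\sigma)$ from the preceding lemma does its work. Neither point is established by your construction as written, so there is a genuine gap: the dual stratification must be replaced before the rest of your outline can be carried out.
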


The construction of a deformation retract is based on planes that are orthogonal to points on simplices that are on the boundary of $|\cK_\alpha(L)|$ (for points in the interior, the deformation is the identity map). If $\sigma$ is a $k$-simplex then each point in its interior is the intersection of the $k$-plane containing $|\sigma|$ and the $(n-k)$-plane that is its orthogonal complement. Convexity implies that these families of orthogonal $(n-k)$-planes cover $|\cA_\alpha(L)| \setminus |\cK_\alpha(L)|$, and each point in this set lies in exactly one such plane. The deformation is defined as linear flow from the boundary of $|\cA_\alpha(L)|$ to
the boundary of $|\cK_\alpha(L)|$.
A consequence of this construction is that the deformation maintains membership in each $\alpha$-cell: $r(A_i,t) \subset A_i$. This last property is an hypothesis for \Lem{fk:subset:fc}.

\section{Discrete Conley Index}\label{app:Conley}

A key tool in computational topology is the Conley index \cite{Conley78}, which can be expressed in terms of the algebraic topology of a pair of sets that are acted upon by a map $f$. Given the Conley index of such a pair one can sometimes prove the existence of fixed points, periodic orbits and equivalence to shift dynamics for the dynamics of on invariant set. In this appendix we briefly recall the definition of the index and some related concepts; for more details see \cite{Easton98, Mischaikow02, KMM04}.

Given a homeomorphism $f: X \to X$, a set $\Lambda$ is \emph{invariant} if $f(\Lambda) = \Lambda$. The \emph{maximal invariant set} contained in a set $K$ is 
\[
	\inv(K) = \{x \in K : f^t(x) \in K , \, \forall \, t \in \bZ\},
\]
which, of course, could be empty.
A compact set $K$ is an \emph{isolating neighborhood} if the subset that remains in $K$ for all time is contained in its interior: $\inv(K) \subset \Int(K)$.
Similarly, a set $S$ is an \emph{isolated invariant set} of $f$ when it is the maximal invariant set in the interior of some isolating neighborhood $K$: $S = \inv(K) \subset \Int(K)$.

The computation of the Conley index relies on the construction of an
\emph{index pair} that gives rise to an isolating neighborhood \cite{Mischaikow02}:
\begin{definition}[Index Pair] \label{def:IndexPair}
	A pair of compact sets $P = (N,E)$ with $E \subset N \subset X$ is
	called an \emph{index pair} for $S = \inv(N\setminus E)$ relative to $f$ if it
	satisfies the following three properties.
	\begin{itemize}
		\item $\cl(N\setminus E)$ is an isolating neighborhood.
		\item $f(E) \cap N \subseteq E$.
		\item $f(N\setminus E) \subset N$.
	\end{itemize}
\end{definition}

\noindent
These three properties are illustrated in \Fig{ConleyIndex}.
The first states that that the set $K = \cl(N\setminus
E)$ is an isolating neighborhood that isolates some (possibly empty)
invariant set $S$. The second property implies that once a trajectory
enters $E$, it will not return to $K$ before leaving the index pair
entirely. The third property states that $E$ contains the exit set of
$N$: that is, the images of points not in $E$ must remain in $N$. It is possible to show that every isolated invariant set $S$ has an index pair \cite{Franks00}.

\InsertFig{ConleyIndex}{An index pair $N = K\cup E$ (a square) and $E$ (two rectangles). The image of $E$ either leaves $N$ or remains in $E$. For this picture the isolating neighborhood $K$ is guaranteed to contain a fixed point. A simple map with this index pair is $f(x,y) = (ax, b y)$ with $a>1>b>0$.}{ConleyIndex}{3in}

For any index pair $P = (N,E)$ there is a quotient space $N/E$ with the equivalence relation $[x] = [y]$ if $x,y \in E$.
A continuous \emph{index map}, $f_P$ can be defined on $N/E$ by
\[
	f_P([x]) = \left\{ \begin{array}{ll}
						f(x), & \mbox{if }x,f(x) \in N\setminus E \\
						{[E]}, & \mbox{otherwise}
					\end{array} \right. , 
\]
A complication is that two index pairs $P$, $P'$ for the same isolated invariant set can have topologically distinct quotient spaces and index maps that are not homotopic. 
However, any such index maps induce \emph{shift equivalent} maps on the homology group $H_*(N,E)$ of $N$ relative to $E$. Shift equivalence is a less-rigid version of conjugacy for non-invertible maps that was introduced by Robert Williams and used in the definition of the discrete Conley index by Franks and Richeson \cite{Franks00}.

\begin{definition}[Shift Equivalence] A pair of endomorphisms $f:X \to X$ and $g:Y \to Y$ are \emph{shift equivalent} if there exist continuous maps $h: X \to Y$ and $k: Y \to X$ such that $h \circ f = g \circ h$ and $f \circ k = k \circ g$, and there exists an $m \in \bN$ such that $h \circ k = g^m$ and $k \circ h = f^m$.
\end{definition}

Note that if $f$ and $g$ are homeomorphisms, they are shift equivalence if and only if they are conjugate, with the conjugacy defined by $c = h \circ f^{-m} = g^{-m} \circ h$, and $c^{-1} = k$.

The point is that for any two index pairs $(N,E)$, $(N',E')$ that isolate the same invariant set, the maps on homology $f_{N*}$ and $f_{N'*}$ are shift equivalent \cite{Franks00}.
This shift equivalence class, $[f_{P*}]_s$, is the \emph{discrete homology Conley index} $\Conley(S,f)$ of the invariant set $S$ isolated by $K$.

One of the fundamental advantages of the Conley index is its
structural stability; for example, if $K$ is an isolating neighborhood
for $f$, then there is $\eps > 0$ such that $K$ is also an
isolating neighborhood for $\tilde{f}$, whenever $\|f -
\tilde{f}\|_\infty < \eps$. Moreover, so long as an invariant set
remains isolated by $K$, its Conley index does not change
\cite{Mischaikow02}.

The simplest implication of a nontrivial Conley index is the \emph{Wazewski property}: whenever $\Conley(S,f) \neq [0]$, then $S \neq \emptyset$ \cite[Thm 10.91]{KMM04}. In addition, periodic orbits are guaranteed when the ``Lefschetz number" is nonzero \cite[Thm 10.46]{KMM04}, and (for $C^\infty$ maps) the topological entropy is positive whenever the shift equivalence class of $f_{P}$ has spectral radius greater than one \cite{Baker02}.

\section{Computing the Conley Index}\label{app:ComputingConley}

In order to use the Conley index to obtain information about a map
$f$, we start by using a multivalued map $\FA$ or $\FW$ to locate
isolating neighborhoods for $f$. Since the construction of the CMM is
analogous to the cubical map of
\cite{KMM04}, we borrow the presentation as well as relevant theorems
and algorithms from that work and from \cite{Day08}.
In most cases, the proofs of the theorems in this section are
identical to those in the original citations if one simply substitutes
the concept of a grid-cell for that of a cube. A thorough
treatment of these results---with respect
to any grid satisfying the first definition in
Section~\ref{sec:Grids}---can be found in
\cite{Mrozek99}. Our goal is to move beyond the
cubical complexes used in previous work and devise a method to
\emph{efficiently} build a \emph{simplicial} multivalued map that contains the
same information as the cellular multivalued map.

We begin by defining trajectories and invariant sets for the multivalued map, following 
\cite{Day04,Day08}.

\begin{definition}[Combinatorial Trajectory]
	A \emph{combinatorial trajectory} of \FA through $A \in
 \cA$ is a bi-infinite sequence of cells, $\Gamma_A = (\ldots,
 A^{(-1)},A^{(0)},A^{(1)},\ldots)$, such that $A^{(0)} = A$ and
 $A^{(n+1)} \subseteq \FA(A^{(n)})$ for all $n \in \bZ$.
\end{definition}
 
\begin{definition}[Combinatorial Invariance]
	Given a cellular multivalued map $\FA:|\cA|\rightrightarrows |\cA|$,
 the \emph{combinatorial invariant part} of $N \subset \cA$ is defined by
	\[
		\inv(N,\FA) := \{ A \in \mathcal{A}: \exists\ \text{a trajectory } \Gamma_A 
		\text{ for which }
		A^{(n)} \subset N \text{ for all } n \in \bZ \}
	\]
\end{definition}
\noindent
The following algorithm can be used to locate the combinatorial invariant part of a compact set $N$.
\begin{algorithm} \label{alg:invariant:part}
invariantPart$(N,\FA)$
	\begin{algorithmic}
		\STATE $S \gets N$
		\REPEAT
			\STATE $S' \gets S$
			\STATE $S \gets \FA(S) \cap S \cap \FA^{-1}(S)$
		\UNTIL{$S = S'$}
		\RETURN $S$
	\end{algorithmic}
\end{algorithm}

\noindent
It is proved in \cite[Thm 10.83]{KMM04}---in the context of cubical sets---that if $N$ is finite this algorithm terminates and returns $\inv(N,\FA)$ (which could be empty). The extension to the cellular case is straightforward.

Associated with this notion of invariance, there is a property of isolation, which is defined using:

\begin{definition}[Combinatorial Neighborhood]
	The \emph{combinatorial neighborhood} of a set $S \subset \cA$ is
	\[
		o(S) := \{B \in \cA: B \cap S \neq \emptyset \}.
	\]
\end{definition}
\noindent
More plainly, the combinatorial neighborhood consists of $S$ and all of the cells that touch its boundary. In order for a combinatorial invariant set to be isolated, it must be the invariant set of some neighborhood. 

\begin{definition}[Combinatorial Isolating Neighborhood]
	A set $K \subset \cA$ is a \emph{combinatorial isolating neighborhood} if
	\[
		o(\inv(K,\FA)) \subseteq K
	\]
\end{definition}
\noindent

Given a guess, $K$, for such a neighborhood, we might be able to find an isolating one by \emph{growing} it: if $K' = \inv(o(K),\FA) \subset K$, then $K$ is isolating, otherwise we replace $K$ by $K'$ and repeat. For example, in \Sec{example} where we are looking for a fixed point, we use the cell containing a nearly recurrent point as the initial guess. This leads to the algorithm of \cite{Day04, Day08}:

\begin{algorithm} \label{alg:grow:isolating}
	growIsolating$(K,\FA)$	
	\begin{algorithmic}
		\WHILE{$\inv(o(K),\FA) \not\subset K$}
			\STATE{$K \gets \inv(o(K), \FA)$}
		\IF{$K \cap \partial |\cA| \neq \emptyset$} \RETURN $\emptyset$ \ENDIF
		\ENDWHILE
		\RETURN $K$
	\end{algorithmic}
\end{algorithm}

\noindent If \emph{growIsolating} is called with a combinatorial set
$K \subset \cA$ and a cellular multivalued map $\FA$, then it returns
a combinatorial isolating neighborhood for \FA---or else it fails when
$K$ intersects the boundary of the grid $\cA$. A sufficient condition
for this not to occur is that $|\cA|$ is itself an isolating
neighborhood because then each cell that touches the boundary of
$|\cA|$ has a neighborhood whose invariant part is contained in
$|\cA|$.

An important point is that when $K$ is isolating for $\FA$, then under certain conditions, $|K|$ is isolating for any continuous selector $f$ of $\FA$:
\begin{theorem}
	Let $\FA:|\cA| \rightrightarrows |\cA|$ be a cellular multivalued map for $f$. Then if $K \subset \cA$ is a combinatorial isolating neighborhood for \FA, $|K|$ is an isolating neighborhood for $f$.
\end{theorem}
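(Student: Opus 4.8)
The plan is to prove the pair of inclusions
\[
	\inv(|K|,f) \subseteq |\inv(K,\FA)| \subseteq \Int(|K|),
\]
whose composition is exactly the assertion that $|K|$ is an isolating neighborhood for $f$. Here $\inv(|K|,f) = \{x \in |K| : f^t(x) \in |K| \text{ for all } t \in \bZ\}$ is the continuous maximal invariant set (recall \App{Conley}), and I treat $K \subset \cA$ as a finite collection of cells so that $|K|$ is compact. The first inclusion transfers continuous invariance to the combinatorial invariant part and uses only that $\FA$ is an outer approximation of $f$; the second is where the isolation hypothesis $o(\inv(K,\FA)) \subseteq K$ and the local finiteness of the grid enter.

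For the first inclusion, let $x \in \inv(|K|,f)$, so every iterate $f^t(x)$ lies in $|K| = \bigcup_{A \in K} A$. For each $t \in \bZ$ I would simply choose a cell $A^{(t)} \in K$ with $f^t(x) \in A^{(t)}$; such a cell exists since $|K|$ is the union of the cells of $K$. I then claim the bi-infinite sequence $(A^{(t)})_{t \in \bZ}$ is automatically a combinatorial trajectory of $\FA$. Recalling that on the interior of a cell the map \Eq{CellularMap} gives the combinatorial image $\FA(A^{(t)}) = \{A' \in \cA : A' \cap f(A^{(t)}) \neq \emptyset\}$, the key observation is that $f^{t+1}(x) = f(f^t(x)) \in f(A^{(t)})$ because $f^t(x) \in A^{(t)}$, while also $f^{t+1}(x) \in A^{(t+1)}$ by the choice of the next cell; hence $A^{(t+1)} \cap f(A^{(t)}) \ni f^{t+1}(x)$, so $A^{(t+1)} \in \FA(A^{(t)})$. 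The selector property is thus built in and needs no further care: any choice of cells of $K$ tracking the orbit already satisfies the trajectory condition. Since all $A^{(t)} \in K$, the cell $A^{(0)}$ belongs to $\inv(K,\FA)$, whence $x \in A^{(0)} \subseteq |\inv(K,\FA)|$.

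For the second inclusion, let $x \in |\inv(K,\FA)|$, say $x \in A^{(0)}$ with $A^{(0)} \in \inv(K,\FA)$. I would invoke grid property (d) to find a compact neighborhood of $x$ meeting only finitely many cells; since the cells are closed and $x$ has positive distance from each of the finitely many cells \emph{not} containing it, there is an open ball $B_r(x)$ meeting only cells that contain $x$. Because $|\cA| = X$, this forces $B_r(x) \subseteq \bigcup\{B \in \cA : x \in B\}$. Now every cell $B$ with $x \in B$ meets $A^{(0)}$ at the point $x$, so $B \in o(\inv(K,\FA))$, and the isolation hypothesis gives $B \in K$. Hence $B_r(x) \subseteq |K|$, so $x \in \Int(|K|)$, completing the inclusion.

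The main obstacle, conceptually, is the trajectory-lifting in the first inclusion, since it is the bridge between the genuine dynamics of $f$ and the purely combinatorial notion of invariance. The resolution rests on $\FA$ being an outer approximation, so that a true orbit can never escape the combinatorial image of the cells it traverses; the clean way to exploit this is the remark that $A^{(t+1)} \in \FA(A^{(t)})$ holds for \emph{any} cell of $K$ containing $f^{t+1}(x)$, which decouples the (free) choice of cells from the trajectory condition. The second inclusion is more routine point-set topology, the only delicacy being the use of closedness of cells and local finiteness (grid property (d)) to extract the ball $B_r(x)$; this step is a direct generalization of the corresponding cubical argument in \cite{KMM04}.
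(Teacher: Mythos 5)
Your proof is correct and is essentially the argument the paper has in mind: the paper gives no details, deferring entirely to \cite[Thm.~10.87]{KMM04} ``generalized to the cellular case,'' and your two inclusions $\inv(|K|,f) \subseteq |\inv(K,\FA)| \subseteq \Int(|K|)$ --- lifting a true orbit to a combinatorial trajectory via the outer-approximation property, then using $o(\inv(K,\FA)) \subseteq K$ together with local finiteness of the grid to get interiority --- are exactly that cubical argument carried over to general cells. No gaps; this is a faithful and complete write-up of the proof the paper omits.
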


\noindent This is essentially \cite[Thm~10.87]{KMM04}, generalized to the cellular case.

The computation of the Conley index begins with an isolating neighborhood $K$ of a cellular multivalued map, with the goal of finding a pair of sets $(N,E)$ that satisfy the definition of an index pair. 
We compute these using the following algorithm.

\begin{algorithm} \label{alg:IndexPair}
	indexPair($K,\FA$)
		\begin{algorithmic}
		\STATE $S \gets \inv(K,\FA)$
		\STATE $C \gets o(S)\setminus S$
		\STATE $E \gets \FA(S) \cap C$
		\REPEAT
			\STATE $E' \gets E$
			\STATE $E \gets \FA(E) \cap C \cap E'$
		\UNTIL{$E= E'$}
		\STATE $N \gets S \cup E$
		\RETURN $(N,E)$
	\end{algorithmic}	
\end{algorithm}

\noindent
This is similar to Alg.~10.86 in \cite{KMM04} which was stated for
cubical sets. It was proven there that if this algorithm is called
with a combinatorial isolating neighborhood $K$ and an outer
approximation $\FA$ of $f$, then the geometric realization of the pair
it returns is an index pair for $f$. This proof can be adapted to the
cellular-map situation.

Given an index pair $(|N|,|E|)$ for $f$, the computation of the discrete
Conley index reduces to finding a representative of the shift equivalence
class $[f_{P*}]_s$ and its action on the relative homology groups,
$H_*(|N|,|E|)$. 

\section{Computational Complexity}\label{app:Complexity}

To analyze the computational complexity of the approach proposed in
this paper, and compare it to that of the cubical-grid version
of \cite{KMM04}, one must consider both run time and memory use. 

In the cubical-grid case, all of the cells that are occupied by data
points must be processed in order to compute the homology. The run
time costs of this have two components. Determining whether an
individual data point is in a particular cell in a $d$-dimensional
cubical grid is a matter of evaluating 2$d$ inequalities: the computational cost is $O(d)$. Constructing the multivalued map requires
checking the images of the $2^d$ grid squares that touch the corner
points of the occupied cells and iteratively expanding that set until there are no empty
intersections \cite{Mischaikow99}. 
This iterative expansion step can be a significant computational
expense. Finding an isolated invariant set
can require iteratively checking the
forward and backward images of the neighboring cells in the grid
(cf. \Alg{grow:isolating}). This too can require
significant computational effort.

The witness-complex approach sidesteps all of this complexity in two
ways: first, by using a \emph{subset} of the data; second, by building
a \emph{simplicial} complex from those landmarks. The computational
costs of this approach are balanced differently than in the cubical
grid case: building the complex is harder but using it is easier. In
particular, constructing a witness complex involves calculating the
distances between every point and every landmark, which has cost $O(\ell \log
T)$ if there are $T$ points and $\ell$ landmarks (using, e.g., a $kd$-tree
algorithm). But this computation parallelizes beautifully; moreover, $\ell \ll T$
in practice---indeed, that is the point of the ``coarsening''
inherent in the witness complex. Moreover, the dimension of each simplex is
only high enough to cover the corresponding part of the invariant set,
whereas all of the grid elements in the cubical case necessarily have
the dimension of the ambient space. This means that
\Alg{grow:isolating} not only has fewer cells to process
in the simplicial case, but also far fewer neighbors to check. For
all of these reasons, the overall complexity in computing the homology
of a witness complex is substantially lower than that of the cubical
grid case. Note, too, that the cellular witness map is automatically
an outer approximation if the conditions of
\Thm{WitnessEnclosure} are satisfied.

The memory costs of the two approaches also arise in different ways.
Informally speaking, in order to use a cubical grid to capture the
dynamics with the same fidelity as a witness complex constructed from
landmarks whose minimum spacing is $\beta$, one would need to use grid
elements of size $\beta / \sqrt d $, where $d$ is the dimension of the
ambient space. The number of cells in this grid would be larger than
the number of $d$-dimensional simplices in the corresponding witness complex. This
effect, which holds even if one disregards empty grid cells, may not
be significant in low dimensions and small data sets, but can become
an issue if the data are large and/or high-dimensional. Moreover, if
the landmarks are spaced uniformly in time along the trajectory, that
spacing---and the geometry of the witness complex---naturally adapts
to the dynamics. Cubical grids do not share this advantageous
property.

Another important difference arises in storing the complex in the
computer's memory. There are a number of extremely efficient ways to
store information about which cells of a cubical grid are occupied by
data points. The free-form nature of simplices would appear to make
storing information about them (points, edges, faces, etc.) more of a
challenge, but that cost can be mitigated by using creative algorithms.
Note, for instance, that if one stores the results of the
witness-landmark calculations mentioned above in the form of a linked
list whose $t^{th}$ element contains a list of the landmarks that are
witnessed by the $t^{th}$ data point, sorted in increasing order by
distance, that data structure \emph{contains all of the information one
needs to describe the witness complex}. Algorithmic creativity can
lower the expense of working with that data structure; we are
currently investigating an approach that stores the witness
relationships in bitmap data structures and uses them as ``masks''
(together with logical operations) to find landmarks that are shared
between different sets of witnesses. And the clique assumption made
here can be used to further streamline this search, since all one
needs to consider is the edges.

While we have not provided a test of these claims about computational efficiency on a large set of high-dimensional data in this paper, we plan to do so in future work.

Finally, we would like to note that while building $\alpha$-complexes
is a computationally demanding task in high dimensions,
we never actually \emph{construct} an $\alpha$-complex. The only roles
of that construct in this work are as a vehicle for extending the
proofs of \cite{KMM04} to the simplicial case.

\bibliographystyle{alpha}
\bibliography{Witnessbib}

\end{document}